\newtheorem{thm}{Theorem}
\newtheorem{cor}[thm]{Corollary}
\newtheorem{lem}[thm]{Lemma}
\newtheorem{prop}[thm]{Proposition}
\newtheorem*{remark}{Remark}
\newcommand{\savehyperref}[2]{\texorpdfstring{\hyperref[#1]{#2}}{#2}}
\newcommand{\bra}[1]{\langle #1|}
\newcommand{\ket}[1]{|#1\rangle}
\DeclareMathOperator{\tr}{tr}
\def\eps{\epsilon}
\mathchardef\ordinarycolon\mathcode`\:
\def\vcentcolon{\mathrel{\mathop\ordinarycolon}}
\newcommand{\nc}{\newcommand}
\nc{\poly}{\operatorname{poly}}
\nc{\polylog}{\operatorname{polylog}} \nc{\Lip}{\operatorname{Lip}}
\nc{\eq}[1]{(\ref{eq:#1})} 
\nc{\eqs}[2]{\eq{#1} and \eq{#2}}
\nc{\eqn}[1]{Eq.~(\ref{eqn:#1})}
\nc{\eqns}[2]{Eqs.~(\ref{eqn:#1}) and (\ref{eqn:#2})}
\newcommand{\secref}[1]{Section~\ref{sec:#1}}
\newcommand{\thmref}[1]{Theorem~\ref{thm:#1}}
\newcommand{\propref}[1]{Proposition~\ref{prop:#1}}
\nc{\region}{\cS\cW}
\begin{document}
\title{Quantum Error Correcting Codes in Eigenstates\\ of Translation-Invariant Spin Chains}

\author{Fernando G.S.L. Brand\~ao}
\affiliation{Department of Physics and Institute for Quantum Information and Matter, California Institute of Technology , Pasadena, CA 91125, USA}
\affiliation{Kavli Institute for Theoretical Physics, UCSB, Santa Barbara, CA 93106, USA}

\author{Elizabeth Crosson}
\affiliation{Department of Physics and Institute for Quantum Information and Matter, California Institute of Technology , Pasadena, CA 91125, USA}
\affiliation{Kavli Institute for Theoretical Physics, UCSB, Santa Barbara, CA 93106, USA}

\author{M.~Burak \c{S}ahino\u{g}lu}
\affiliation{Department of Physics and Institute for Quantum Information and Matter, California Institute of Technology , Pasadena, CA 91125, USA}
\affiliation{Kavli Institute for Theoretical Physics, UCSB, Santa Barbara, CA 93106, USA}

\author{John Bowen}
\affiliation{University of Chicago, Department of Physics, Chicago, IL 60637 USA}

\date{\today}

\begin{abstract}
Quantum error correction was invented to allow for fault-tolerant quantum computation. Systems with topological order turned out to give a natural physical realization of quantum error correcting codes (QECC) in their groundspaces. More recently, in the context of the AdS/CFT correspondence, it has been argued that eigenstates of CFTs with a holographic dual should also form QECCs. These two examples raise the question of how generally eigenstates of many-body models form quantum codes. In this work we establish new connections between quantum chaos and translation-invariance in many-body spin systems, on one hand, and approximate quantum error correcting codes (AQECC), on the other hand. We first observe that quantum chaotic systems exhibiting the Eigenstate Thermalization Hypothesis (ETH) have eigenstates forming approximate quantum error-correcting codes. Then we show that AQECC can be obtained probabilistically from translation-invariant energy eigenstates of every translation-invariant spin chain, including integrable models. Applying this result to 1D classical systems, we describe a method for using local symmetries to construct parent Hamiltonians that embed these codes into the low-energy subspace of gapless 1D quantum spin chains.  As explicit examples we obtain local AQECC in the ground space of the 1D ferromagnetic Heisenberg model and the Motzkin spin chain model with periodic boundary conditions, thereby yielding non-stabilizer codes in the ground space and low energy subspace of physically plausible 1D gapless models.  
\end{abstract}

\maketitle

\section{Introduction}

Quantum error correcting codes (QECC) were originally designed for fault-tolerant quantum computation \cite{shor}. The idea is to cleverly encode the quantum information into entangled states in a way that the information is inaccessible locally. At first sight, it may seem the conditions for quantum error correction are very different from everything we have normally in nature, and that it would take very special engineered quantum systems to realize it. This intuition turned out to be wrong; QECCs appear naturally in the groundspace of topological ordered systems \cite{kitaev2003fault}. This connection has lead to many insights both in the study of quantum error correction \cite{surfaceexp, brown} and of topological order \cite{haah, fu} in the past 20 years. 

In a different direction, in recent years there have been ongoing efforts of connecting the holographic correspondence to quantum error correction. In the  Anti-de Sitter (AdS)/Conformal Field Theory (CFT) correspondence~\cite{maldacena1999large,witten1998anti}, it has been understood to a certain degree that, bulk local operators in AdS are dual to nonlocal operators on the boundary CFT~\cite{HKLL}. Quantum error correction has recently been used~\cite{almheiri2015bulk} for explaining seemingly puzzling facts about this correspondence. It was argued that bulk local operators, reconstructed on the boundary, should commute with boundary local operators only within a certain subspace of the full boundary CFT Hilbert space. Interpreting this subspace as the code subspace of an error correcting code not only clears the apparent puzzles but also gives a new information-theoretic perspective to the AdS/CFT correspondence. Since then quantum error correction has served as a guiding feature for the application of tools from quantum information to the challenge of constructing explicit realizations of AdS/CFT duality~\cite{pastawski2015holographic, harlow2016jerusalem, hayden2016holographic}. Understanding holographic codes from the perspective of the CFT continues to be a major open challenge~\cite{pastawski2017towards, kim2017entanglement}. 

In this Letter, we explore one-dimensional physical systems through the lens of AQECC. Guided by the codes found in the ground space of topologically ordered gapped Hamiltonians and the expectation of good codes in eigenspaces of certain CFTs (motivated by AdS/CFT correspondence), we ask what other physical conditions lead to good quantum codes. First we observe a connection between quantum chaos and quantum error correction, pointing out that the Eigenstate Thermalization Hypothesis (ETH) \cite{srednicki1994chaos} can be interpreted as saying that eigenstates with close-by energies form an AQECC. This observation directly supports the QECC view of the AdS/CFT correspondence, as the CFTs considered there are expected to be chaotic. Then we show that merely translation-invariance of the Hamiltonian already implies that most (translation-invariant) eigenstates in a subextensive energy window of finite energy density form AQECCs.  This general result also applies to integrable models and even to non-interacting Hamiltonians. In some of these cases we show that it is possible to use local symmetries of the states to generate an interacting Hamiltonian that embeds the finite energy eigenstates (i.e., the codespace) of the noninteracting Hamiltonian into the groundspace or low-lying energy subspace of gapless 1D quantum systems. As examples we show how this procedure can give rise to the Heisenberg and Motzkin models.  For these systems we confirm the AQECC performance of the low energy eigenspace by direct calculations, thereby showing that non-stabilizer codes can appear at low energy in physically plausible 1D models. The precise statements about the distance, the dimension of the codespace and the scaling of the error of the AQECC, are given for each case.

\section{Approximate QECC}\label{sec:AQECC)} We start with a brief description of the features of approximate quantum error correction.  For exact quantum error correction, Knill and Laflamme gave a convenient set of necessary and sufficient conditions for a code being able to correct a noisy channel~\cite{KnillLaflamme}. Similar conditions for the approximate case were found by Beny and Oreshkov \cite{beny2010general}, which we now review. We consider $N$ qubits arranged in a line and assume that errors are local. We say that a subspace ${\cal C}$ of a $2^N$-dimensional vector space is a $[[N, k, d, \varepsilon]]$ approximate quantum error correction code (AQECC) if $\text{dim}({\cal C}) = 2^k$ and for every channel $\Lambda$ acting on at most $d$ consecutive qubits, we have
\begin{equation} \label{AQECC}
\min_{\ket{\psi} \in {\cal C}^{\otimes 2}} \max_{{\cal D}}  \bra{\psi}({\cal D} \circ \Lambda \otimes I)(\ket{\psi}\bra{\psi}) \ket{\psi} \geq 1 - \varepsilon,
\end{equation}
where the maximum is over decoding channels ${\cal D}$, and the minimum is over pure entangled states acting on ${\cal C}$ and a reference system (we denote the tensor product space of ${\cal C}$ and the reference by ${\cal C}^{\otimes 2}$ above). In words, this condition states that one can correct, up to error $\varepsilon$, the effect of local noise on at most $d$ qubits. If Eq.~\eqref{AQECC} only works for a particular $\Lambda$, we say the code is $\varepsilon$-correctable under $\Lambda$. 

In this work we find it convenient to consider a set of codewords that span the code space,  $\mathcal{C}= \textrm{span}(\{|\psi_1\rangle, ... , |\psi_{2^k}\rangle\}) \subset \mathbb{C}^{2^N}$, and show that these codewords satisfy an approximate version of the Knill-Laflamme conditions,
\begin{equation}
\langle \psi_i | E | \psi_j\rangle = C_{E} \delta_{ij} + \varepsilon_{ij}. \label{eq:aqec}
\end{equation}
Corollary \ref{easytostrongmapping} of the Appendix shows that if this condition is satisfied, then the error of the code as defined in (\ref{AQECC}) can be bounded as $\varepsilon \leq 2^{2(k+d)}\max_{i,j}\varepsilon_{ij}$.
For many-body systems with $N$ sites, it is natural to seek $\varepsilon \leq \mathcal{O}(N^{-c})$ so that the probability of recovering the logical state converges to 1 quickly with increasing system size.

\section{AQECC from ETH} 
The Eigenstate Thermalization Hypothesis (ETH) states that thermalization in a quantum system takes place already on the level of eigenstates. Given the Hamiltonian $H = \sum_k E_k \ket{E_k}\bra{E_k}$, with $|E_k\rangle$ being energy eigenstates with eigenvalue $E_k$ (ordered as $E_1 \leq E_2 \leq ...$ ), Srednicki proposed the following version of ETH \cite{srednicki1994chaos}: There are constants $c_1, c_2 > 0$ such that for every $E_l, E_k$ in the bulk of the spectrum and for any local observable $O$,
\begin{equation} \label{eq:eth}
|\bra{E_l} O \ket{E_l} - \bra{E_{l+1}} O \ket{E_{l+1}}| \leq \exp(- c_1 N),
\end{equation}
and
\begin{equation} \label{eq:eth2}
|\bra{E_k} O \ket{E_l}| \leq \exp(- c_2 N).
\end{equation}

Indeed Eq.~\eqref{eq:eth} tells us that the energy eigenstates around $\overline{E}$ are locally indistinguishable from each other, and therefore also from the thermal state of the same energy. They ensure that the long-time average of any local observable is thermal. Eq.~(\ref{eq:eth2}), in turn, guarantees that the fluctuations around the long-time average is small.  Comparing the ETH condition Eq.~\eqref{eq:eth} to the AQECC condition Eq.~\eqref{eq:aqec}, we observe that:
\begin{remark}
ETH implies that any region of the spectrum with finite energy density have eigenstates forming approximate error correcting codes.
\end{remark}

Note that the distance of the code is given by the range of locality for which ETH holds in the system. This is expected to vary depending on the model, and can be as large as a constant fraction of the size of the system \cite{GarrisonGrover15}. From Eq.~\eqref{eq:aqec} and Corollary~\ref{easytostrongmapping} of the Appendix, we find that the codes have constant rate, i.e. $k = \Omega(N)$, and exponentially small error. Note that, these are very good codes for highly chaotic systems in which ETH holds for $d$-local observables with $d = \Omega(N)$. 

However, a major drawback is that the codewords are exponentially close to each other in energy, hence it is not clear at all if the Hamiltonian can help with encoding and decoding. One way forward is to split the codewords in energy by sacrificing either the dimension of the codespace or the error of the code. We leave to future work to investigate whether the locality of the Hamiltonian leads to good ways of encoding and decoding in this case.

Notice that ETH codes introduced above are somewhat analogous to random subspace codes (in terms of the parameters achieved)~\cite{hayden2004}. This is no coincidence. One of the ways of understanding quantum chaos is that apart from a few conserved quantities (e.g., energy), the physics of the model mimics the one of a fully random system. Here we give a coding perspective of this view.  

An important application of the observation is in connection to the recent proposal of interpreting some aspects of the AdS/CFT correspondence as an error correcting encoding of the AdS bulk into the boundary CFT~\cite{almheiri2015bulk}. It is expected that holographic CFTs are chaotic and thus satisfy ETH \cite{lashkari2016}. Therefore our observation provides strong evidence in favor of the proposal in Ref.~\cite{almheiri2015bulk}. However, ETH is a claim about eigenstates with finite energy density, whereas the error correcting properties of eigenstates of holographic CFTs are expected to hold even at zero energy. We will partially address this point later in the paper, constructing specific examples of gapless spin chains with AQECC in their low-lying spectrum. The connection of ETH and AQECC that we point out also suggests that such holographic CFTs might be chaotic in an extreme sense of satisfying ETH at all energies. 

\section{AQECC from Translation-Invariance} Although ETH is expected to hold for a large class of systems, its range of validity is still not completely understood. Our next result shows that even just from translation invariance we can already get codes from eigenstates of local models (albeit with worse parameters). Consider a 1D translation invariant Hamiltonian with $N$ sites. 
Let $S_E$ be the set of energy eigenvalues close to $E$: $S_E := \{ E_k :   E_k \in [E - \sqrt{N}, E + \sqrt{N}]  \}$, and define the microcanonical state of energy $E$ as 
\begin{equation}
\tau_{\text{MC(E)}} := \frac{1}{|S_E|} \sum_{k : E_k \in S_{E}} \ket{E_k}\bra{E_k}.
\end{equation}

Note that in one-dimension the correlation length is a function of mean energy $e := E/N$ only, and it is a constant independent of system size when $e$ is too. The choice $2 \sqrt{N}$ for the energy window is arbitrary; all we need is that the associated microcanonical ensemble has finite correlation length, which is true as long as it is subextensive and larger than $\polylog(N)$ \cite{brandao2015equivalence}. 

We prove that:

\begin{thm}\label{thm:main}
Let $H$ be a 1D translation invariant local Hamiltonian and $E$ be such that the microcanonical state at energy $E$ has finite correlation length (independent of system size). Pick $\ket{E_{i_1}}, \ldots, \ket{E_{i_L}}$ uniformly independently at random from $S_{E} := \{ \ket{E_{i}} : E_i \in [E - \sqrt{N}, E + \sqrt{N}] \}$, where $\{ \ket{E_i} \}_{i}$ is a  basis of translation-invariant eigenstates of $H$, and $k := \log(L) = \Omega (\log(N))$. Then with high probability they form an $[[N, k, d, \varepsilon]]$ AQECC with $\varepsilon = O(1/N^{1/8})$ and 
\begin{equation}
d = \min \left( \Omega(\log(N)), \min_{p \neq q \in [L]} |E_{i_{p}} - E_{i_{q}}| - O(\log(N))  \right).
\end{equation} 
\end{thm}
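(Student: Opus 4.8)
The plan is to verify the approximate Knill--Laflamme conditions \eqref{eq:aqec} for the random codewords $\ket{E_{i_1}},\dots,\ket{E_{i_L}}$ against every observable $O$ of norm at most $1$ supported on $d$ consecutive sites, and then apply Corollary~\ref{easytostrongmapping} to obtain the bound \eqref{AQECC}. Since $k=O(\log N)$ and we will enforce $d=O(\log N)$, the prefactor $2^{2(k+d)}$ is only $\mathrm{poly}(N)$, so it suffices to show $\langle E_{i_p}|O|E_{i_q}\rangle = C_O\,\delta_{pq}+\varepsilon_{pq}$ with $|\varepsilon_{pq}|\le O(N^{-\alpha})$, uniformly over such $O$ and all $p,q$, for a constant $\alpha$ large compared to the constants implicit in $d=\Omega(\log N)$ and $k=\Omega(\log N)$. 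The diagonal ($p=q$) and off-diagonal ($p\neq q$) parts are proved by different means, and only the diagonal part uses the randomness.

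For $p\neq q$ I would run an iterated-commutator (Lieb--Robinson-type) estimate that uses nothing beyond locality of $H$. Writing $\Delta_{pq}:=|E_{i_p}-E_{i_q}|$, and using $H\ket{E_i}=E_i\ket{E_i}$, for every integer $m\ge 1$
\begin{equation}
\langle E_{i_p}|O|E_{i_q}\rangle \;=\; \frac{\langle E_{i_p}|\,\mathrm{ad}_H^{\,m}(O)\,|E_{i_q}\rangle}{(E_{i_p}-E_{i_q})^{m}},
\qquad \mathrm{ad}_H(X):=[H,X].
\end{equation}
Because $H$ is a sum of bounded-norm terms of bounded range, one commutator enlarges the support of an operator by a constant and multiplies its norm by at most a constant times the current support size; hence $\|\mathrm{ad}_H^{\,m}(O)\| \le \big(c\,(d+m)\big)^{m}$ for a model-dependent $c$. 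Optimizing over $m$ gives $|\langle E_{i_p}|O|E_{i_q}\rangle| \le \exp\!\big(-\Omega(\Delta_{pq}-c'd)\big)$, which is $O(N^{-\alpha})$ exactly when $d \lesssim \Delta_{pq}-\Omega(\log N)$ (up to the model-dependent constant $c'$, which is harmless since $d=O(\log N)$). Minimizing over $p\neq q$ yields the second entry of the claimed distance; no probabilistic input is needed here.

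For $p=q$ I would show that, with high probability over the draw, \emph{all} of the chosen eigenstates have their $d$-site reduced states $\rho^{(k)}_d$ within trace distance $O(N^{-\alpha})$ of $\rho^{\mathrm{MC}}_d:=\tau_{\text{MC(E)}}$ restricted to $d$ sites, so that $\langle E_{i_p}|O|E_{i_p}\rangle = \tr(O\,\rho^{\mathrm{MC}}_d)+\varepsilon_{pp}$ with $C_O:=\tr(O\,\rho^{\mathrm{MC}}_d)$ independent of $p$. The mechanism: translation invariance of each eigenstate lets one replace any operator $F$ on $d$ sites by its translation average $\bar F=\tfrac1N\sum_{j}T^{j}FT^{-j}$ inside $\langle E_k|\cdot|E_k\rangle$, and hence bound the ensemble fluctuation $\tfrac{1}{|S_E|}\sum_{k : E_k\in S_E}\big(\langle E_k|F|E_k\rangle-\langle F\rangle_{\mathrm{MC}}\big)^2$ by $\mathrm{Var}_{\mathrm{MC}}(\bar F)$, which is $O(\|F\|_{\infty}^{2}(d+\xi)/N)$ because $\tau_{\text{MC(E)}}$ has finite correlation length $\xi$ \cite{brandao2015equivalence}. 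Summing over a Hilbert--Schmidt-orthonormal (e.g.\ rescaled Pauli) basis of the $d$-site algebra controls $\tfrac{1}{|S_E|}\sum_k\|\rho^{(k)}_d-\rho^{\mathrm{MC}}_d\|_2^2$, and converting to trace norm costs a factor $2^{d/2}$, leaving $\tfrac{1}{|S_E|}\sum_k\|\rho^{(k)}_d-\rho^{\mathrm{MC}}_d\|_1 = O\!\big(2^{d}\sqrt{(d+\xi)/N}\big)$; this is $o(1)$ precisely when $d$ is a small enough multiple of $\log N$, which is the origin of the first entry $\Omega(\log N)$ of the distance. A Markov bound then shows the ``atypical'' eigenstates are a vanishing fraction of $S_E$, and since $L=2^{k}$ is a small enough power of $N$, a union bound over the $L$ independent draws shows that with high probability none of them is atypical. (One may also note that, whp, the realized gap $\min_{p\neq q}\Delta_{pq}$ is at least $\sim \sqrt N/L^{2}\gg \log N$, so the distance is indeed $\Omega(\log N)$.)

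Combining the two parts, every approximate Knill--Laflamme error obeys $|\varepsilon_{pq}|\le O(N^{-\alpha})$ with $\alpha$ as large as we like provided the constants in $d,k=\Theta(\log N)$ are taken small enough; choosing them so that $2^{2(k+d)}\cdot O(N^{-\alpha})=O(N^{-1/8})$ and invoking Corollary~\ref{easytostrongmapping} finishes the proof. I expect the main obstacle to be the diagonal case: establishing that \emph{most} (not necessarily all) translation-invariant eigenstates in a subextensive window are locally indistinguishable from the microcanonical state, using only the finite-correlation-length hypothesis and in particular without invoking ETH, so that integrable and non-interacting models are covered. The two delicate ingredients are the use of translation invariance of the eigenstates to replace fixed-region operators by their translation averages (so that clustering of $\tau_{\text{MC(E)}}$ can be applied), and the control of the $2^{d/2}$ loss in the Hilbert--Schmidt-to-trace-norm passage, which is what caps $d$ at $O(\log N)$ and, together with the $2^{2(k+d)}$ factor from Corollary~\ref{easytostrongmapping}, pins down the exponent in $\varepsilon=O(N^{-1/8})$.
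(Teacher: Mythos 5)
Your overall architecture matches the paper exactly: verify the approximate Knill--Laflamme conditions for the random draws, handle the diagonal entries by showing most translation-invariant eigenstates are locally thermal, handle the off-diagonal entries by exponential suppression in the energy gap, and convert via Corollary~\ref{easytostrongmapping}. Where you diverge is in the two technical engines, and both divergences are legitimate.

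For the off-diagonal part you re-derive, via iterated commutators $\langle E_{i_p}|O|E_{i_q}\rangle = \langle E_{i_p}|\mathrm{ad}_H^m(O)|E_{i_q}\rangle/(E_{i_p}-E_{i_q})^m$ and the norm growth $\|\mathrm{ad}_H^m(O)\|\le (c(d+m))^m$, precisely the estimate the paper imports as \prettyref{lem:offdiagonalsmall} (Theorem 2.1 of \cite{arad2016connecting}, attributed to Hastings). Same mechanism, just proved rather than cited; this buys self-containedness at no cost in strength. For the diagonal part the paper proves an exponential tail (\prettyref{prop:Moriquantitative}) by a moment-generating-function / Markov argument on the translation-averaged observable $\overline{O'}$, feeding in Anshu's concentration inequality (\prettyref{lem:largedeviation}) for the microcanonical ensemble, then upgrades from a fixed observable to a trace-norm statement with an $\varepsilon$-net whose cardinality $(1/\delta)^{O(2^{2d})}$ is what caps $d$ at $O(\log N)$. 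You instead do a second-moment computation: translation invariance gives $\langle E_k|F|E_k\rangle=\langle E_k|\overline F|E_k\rangle$, Cauchy--Schwarz bounds the ensemble mean-square fluctuation by $\mathrm{Var}_{\mathrm{MC}}(\overline F)=O(\|F\|^2(d+\xi)/N)$ from clustering alone, and summing over a rescaled-Pauli basis plus the $2^{d/2}$ Hilbert--Schmidt-to-trace loss controls $\frac{1}{|S_E|}\sum_k\|\rho^{(k)}_d-\rho^{\mathrm{MC}}_d\|_1$, after which Markov and a union bound over the $L=N^a$ draws finish. This is a genuinely more elementary route: it needs only two-point clustering of $\tau_{\mathrm{MC}(E)}$, not Anshu's full large-deviation bound, and it sidesteps the $\varepsilon$-net in favor of a Parseval/HS-basis count. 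The trade-off is that Markov gives only a polynomial rather than exponential tail on the fraction of atypical eigenstates, so the implicit constants in $d=b\log N$, $k=a\log N$ are squeezed harder (you effectively need something like $5a/2+3b/2<1/4$ rather than the paper's roomier budget) and you cannot hope to reproduce the paper's stated error exponent exactly --- though note that the paper's own proof ends with $O(N^{-1/4})$ while the theorem claims $O(N^{-1/8})$, so the constant in the exponent is already somewhat loose there. Net: your proof is correct in structure and establishes the qualitative content of the theorem with a possibly weaker but still inverse-polynomial $\varepsilon$; the paper's choice of Anshu + $\varepsilon$-net simply yields stronger tail control and therefore cleaner constants.
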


Note that by choosing $k = \delta \log(N)$ for sufficiently small $\delta$, the minimum energy gap will be of order $n^{\Omega(1)}$, and thus the distance of the code is $\Omega(\log(N))$ with high probability. 

The proof in \secref{proofTI} builds upon two results. First, the result of \cite{mori2016weak} establishes a weak version of the eigenstate thermalization hypothesis (ETH) for 1D translation invariant systems (see Lemma \ref{Moriquantitative}): The fraction of the nonthermal energy eigenstates around the microcanonical energy $E$ is exponentially small with the system size $N$. This means that with high probability, randomly chosen codewords do look like the thermal state, and hence are locally indistinguishable. Second, the result from \cite{arad2016connecting} states that eigenstates of general (not necessarily translation-invariant) local Hamiltonians with different energies cannot be ``connected" by local operators, in the sense that the off-diagonal matrix elements of the local operator in energy eigenbasis drop off exonentially with the energy gap (see Lemma \ref{offdiagonalsmall} in \secref{proofTI}). This tells us to choose the codewords sufficiently far apart in energy so that we have the desired distance for the code. 

Translation invariance is crucial in the proof of the results. Technically, it allows us to replace the local observable by an extensibe observable, given by a sum of trnaslations of the original one. Then we can use techniques of large deviation bounds on the measurement of extensive observables in non-critical spin systems to obtain the result. Intuitively, translation invariance guarantees that the information of the codewords is spread to the whole system ``uniformly", and hence cannot be corrupted locally by noise. 

Note that in addition to translation invariance, the only feature of 1D systems we use in the proof is that the microcanonical states at finite energy density always have a finite correlation length. Therefore the theorem generalizes to higher dimensions for eigenstates with finite energy densities (albeit with a worse scale of the error of the code). 

\section{AQECC from the Low-Energy Eigenspace of Gapless Models}

So far we have considered eigenstates at finite energy density. Here we show they are also relevant to the low-lying spectrum of gapless models. We first apply \thmref{main} to noninteracting models, and map the codewords at finite energy eigenstates to low-energy eigenspace of interacting models. We then further analyze the performance of these specific codes by explicitly revealing the working code subspace.

\vspace{0.2 cm}

\noindent \textbf{Classical Models:}
Consider a 1-local Hamiltonian on a system of $N$ qubits,
\begin{equation}
H = \sum_{i=1}^N \frac{1}{2}\left(I - \sigma^z_i\right),\label{eq:1localex}
\end{equation}
which has eigenvalues $0,1,...,N$.  \thmref{main} implies that with high probability a subset of $L$ randomly chosen translation invariant eigenstates of Eq.~\eqref{eq:1localex} with energies in $[\frac{N}{2} -\sqrt{N},\frac{N}{2} + \sqrt{N}]$ will be an AQECC with $\log(N)$ distance. As eigenstates, we can take uniform superpositions of $\sigma^z$-basis states $|\mathbf{s}\rangle$, where $\mathbf{s} = (s_1,...,s_N) \in \{-1,1\}^N$, with a particular magnetization $M(\mathbf{s}) = \sum_{i=1}^N s_i$, 
\begin{equation}
|h_m^N\rangle=  \frac{1}{\sqrt{{N \choose N/2 + m/2}}} \displaystyle\sum_{\mathbf{s} : M(\mathbf{s}) = m} |\mathbf{s} \rangle . \label{eq:Heisenbergstates}
\end{equation}

\noindent \textbf{Mapping to Low-Lying Eigenstates:} Although \thmref{main} only applies to states with finite energy density (when the correlation length of the microcanonical state is finite), it turns out that the excited state AQECC in the example above can be embedded into low energy states of a different local model.  This connection is based on the fact that the permutation symmetric energy eigenstates \eqref{eq:Heisenbergstates} of the spin-1/2 model \eqref{eq:1localex} also span the ground space of the ferromagnetic Heisenberg model,
\begin{equation}
H = -\frac{1}{2}\sum_{j = 1}^N \left(\sigma^x_j \sigma^x_{j+1} + \sigma^y_j \sigma^y_{j+1}+ \sigma^z_j \sigma^z_{j+1}\right).\label{eq:Heisenbergham}
\end{equation}
For ease of notation we consider the version of this model with periodic boundary conditions (PBCs).  In the Appendix we choose codewords with magnetization in the range $(-\sqrt{N},\sqrt{N})$ and show the following proposition by explicit calculation. 
\begin{prop}\label{prop:Heisenberg}
For any $a, b > 0$ with $5 a/2 +  b < 1/2$ the ground space of the spin 1/2 ferromagnetic Heisenberg model with $N$ sites and PBCs contains an $[[N, k, d, \varepsilon]]$ AQECC with $k = a \log N$ , $d = b \log N$, and $\varepsilon = \mathcal{O}\left(\frac{\log^{2} N}{N^{1/2 - 5 a/2 - b}} \right)$.
\end{prop}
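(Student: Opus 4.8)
The plan is to write down explicit codewords inside the Heisenberg ground space, verify that they satisfy the approximate Knill--Laflamme conditions \eqref{eq:aqec}, and then invoke Corollary~\ref{easytostrongmapping} to convert this into the AQECC guarantee \eqref{AQECC}. For the codewords I would take $L = 2^{k} = N^{a}$ of the Dicke states $\ket{h^{N}_{m}}$ of \eqref{eq:Heisenbergstates}, with magnetizations $m_{1} < \dots < m_{L}$ spread as evenly as possible inside the window $(-\sqrt N,\sqrt N)$, so that consecutive codeword magnetizations differ by roughly $N^{1/2-a}$, which for large $N$ exceeds $2d = 2b\log N$. Since the ferromagnetic Heisenberg model with PBCs has ground space exactly equal to the permutation-symmetric subspace, and every $\ket{h^{N}_{m}}$ lies in it, these codewords span a $2^{k}$-dimensional subspace of the ground space, as the statement requires.

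The off-diagonal Knill--Laflamme conditions I would dispatch by a magnetization selection rule, which makes $\varepsilon_{ij}$ vanish \emph{exactly} for $i\neq j$. Each $\ket{h^{N}_{m}}$ is an eigenvector of $S^{z} = \tfrac12\sum_{i}\sigma^{z}_{i}$, and any operator $E$ supported on $d$ consecutive sites decomposes into components that change the magnetization by an even integer of magnitude at most $2d$. With the codeword spacing chosen larger than $2d$ this forces $\bra{h^{N}_{m_{i}}} E \ket{h^{N}_{m_{j}}} = 0$ whenever $i\neq j$, so the whole error budget is carried by the diagonal terms.

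For the diagonal terms I would compute the reduced states explicitly. By permutation symmetry the reduced density matrix of $\ket{h^{N}_{m}}$ on any $d$ sites is $\rho^{(d)}_{m} = \sum_{j=0}^{d} p^{(m)}_{j}\,\ket{D^{d}_{j}}\bra{D^{d}_{j}}$, where $\ket{D^{d}_{j}}$ is the $d$-qubit Dicke state of weight $j$ and $p^{(m)}_{j} = \binom{d}{j}\binom{N-d}{w-j}/\binom{N}{w}$, $w = (N+m)/2$, is a hypergeometric weight. Choosing the common reference value $C_{E} := \tr\!\big(E\,\rho^{(d)}_{0}\big)$, we get $|\varepsilon_{ii}| = \big|\tr\!\big(E(\rho^{(d)}_{m_{i}} - \rho^{(d)}_{0})\big)\big| \le \|\rho^{(d)}_{m_{i}} - \rho^{(d)}_{0}\|_{1} = 2\,d_{\mathrm{TV}}\big(p^{(m_{i})}, p^{(0)}\big)$. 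The point is that this total-variation distance is tiny because the magnetization window $\sqrt N$ is negligible compared with $N$: replacing the hypergeometric weights by $\mathrm{Binom}(d, w/N)$ with a controlled correction and bounding the binomial total-variation distance (e.g.\ through Pinsker) gives, uniformly in $i$, a bound of order $\sqrt{d}\,|m_{i}|/N$, i.e.\ $\mathcal O(\sqrt{\log N}/\sqrt N)$ up to polylogarithmic factors.

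It then remains to feed $\max_{i,j}|\varepsilon_{ij}|$, together with $k = a\log N$ and $d = b\log N$, into Corollary~\ref{easytostrongmapping}. The main obstacle is precisely this last step: a black-box use of the $2^{2(k+d)}$ amplification is too lossy, and to reach the advertised $\varepsilon = \mathcal O\!\big(\log^{2}N / N^{1/2 - 5a/2 - b}\big)$ one has to be careful about which error operators actually contribute (only magnetization-preserving Pauli components enter the diagonal, and all off-diagonal contributions are identically zero), and to keep the hypergeometric total-variation estimate sharp with only polylogarithmic overhead. The hypothesis $5a/2 + b < 1/2$ is exactly the condition under which the amplified error is $o(1)$, and once the diagonal estimate and the selection rule are in place the distance claim $d = b\log N$ is immediate. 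Everything here is ``explicit calculation'' in the sense of the statement; the only genuinely delicate piece is the quantitative control of $\|\rho^{(d)}_{m_{i}} - \rho^{(d)}_{0}\|_{1}$ and the propagation of this bound through the amplification.
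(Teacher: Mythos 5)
Your codeword choice is the source of the trouble, not the amplification step. You spread the $L=2^{k}$ magnetizations evenly over the full window $(-\sqrt N,\sqrt N)$, so $m_{\max}\approx\sqrt N$ and the spacing is $\approx N^{1/2-a}$, which is far more separation than you need. The diagonal error you then obtain, $\max_{i}\lvert\varepsilon_{ii}\rvert = \mathcal O(\sqrt d\,\lvert m_i\rvert/N)=\mathcal O(\sqrt{\log N}/\sqrt N)$, is controlled entirely by $m_{\max}$, and with $m_{\max}\approx\sqrt N$ it is simply too large: feeding it into Corollary~\ref{easytostrongmapping} (in either its $2^{d+2k}\max_{ij}\varepsilon_{ij}^{1/2}$ or $2^{2(k+d)}\max_{ij}\varepsilon_{ij}$ form) gives a condition on $(a,b)$ strictly stronger than $5a/2+b<1/2$, so the proposition as stated does not follow. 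You diagnose the shortfall as Corollary~\ref{easytostrongmapping} being ``too lossy,'' but the paper uses that corollary as a black box and it is sharp enough; the slack is in your choice of window.

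The fix, and what the paper does, is to pack the codewords as tightly as the selection rule permits: take spacing $I=2d+1$ (the minimum that still forces $\langle h^{N}_{m}|E|h^{N}_{m'}\rangle=0$ for all $d$-local $E$), so that $m_{\max}=d\cdot 2^{k}=\mathcal O(N^{a}\log N)$ rather than $\sqrt N$. With that choice the paper's binomial asymptotics give $\|\rho_d(m,N)-\rho_d(m',N)\|_1=\mathcal O(d^{3/2}m_{\max}/N)=\mathcal O(d^{5/2}2^{k}/N)$, and then $2^{d+2k}(\max_{ij}\varepsilon_{ij})^{1/2}=\mathcal O\bigl((\log N)^{5/4}/N^{1/2-5a/2-b}\bigr)$, which is the claimed bound (the stated $\log^{2}N$ is just a coarser bookkeeping of the polylog factor). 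Your TV-distance estimate $\mathcal O(\sqrt d\,\lvert m-m'\rvert/N)$ is in fact slightly sharper than the paper's $\mathcal O(d^{3/2}\lvert m-m'\rvert/N)$, so if you substitute the tighter $m_{\max}=d\cdot 2^{k}$ your route goes through and even yields a marginally smaller power of $\log N$. The rest of your outline — Dicke-state codewords living in the symmetric ground space, exact off-diagonal vanishing by the magnetization selection rule, hypergeometric reduced-state weights, and the reduction through Corollary~\ref{easytostrongmapping} — matches the paper's proof.
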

Specifically, we prove \propref{Heisenberg} in terms of Eq.\eqref{eq:aqec}. A $d$-local error can change the magnetization by at most $2d$, so for different codewords, i.e. the case $i \neq j$, we have zero error in Eq.\eqref{eq:aqec}. Furthermore, the $d$-body reduced density matrix of different codewords are indistinguishable in the thermodynamic limit, i.e. this gives the error for the cases $i=j$ in Eq.\eqref{eq:aqec}.  
Note that the AQECC parameters achieved in \propref{Heisenberg} are asymptotically equivalent to those in \thmref{main}, though one difference is that in \propref{Heisenberg} the codewords are chosen deterministically.  Finally, we note that the existence of error correcting codes in the ground space of Heisenberg models has been observed before~\cite{ouyang2014permutation, ouyang2016permutation}, although the choices of code words as well as the QEC parameters differ in that work from the ones presented here.  

\vspace{0.2 cm}

Just as finite energy density codes of \eq{1localex} can be embedded in the ground space of the Heisenberg model, one can also consider the spin 1 version of \eq{1localex}, 
\begin{equation}
H = \sum_{i=1}^N \frac{1}{2}\left(I - S^z_i\right) .\label{eq:1localexSpin1}
\end{equation}
The permutation invariant eigenstates of \eq{1localexSpin1} are uniform superpositions of basis states $|\mathbf{w}\rangle$, where $\mathbf{w} = (w_1,...,w_N) \in \{-1,0,1\}^N$, with a particular magnetization $M(\mathbf{w}) = \sum_{i=1}^N w_i$, 
\begin{equation}
|g^N_m\rangle=  \dfrac{1}{|g^N_m|} \displaystyle\sum_{\mathbf{w} : M(\mathbf{w}) = m} |\mathbf{w} \rangle . \label{eq:motzkinstates}
\end{equation}
By \thmref{main} a randomly chosen subset of $L$ states of the form \eq{motzkinstates} with magnetization $m \in (-\sqrt{N},\sqrt{N})$ will with high probability form an AQECC with distance $\Theta(\log N)$.  Just as a finite energy density AQECC of \eq{1localex} was turned into a ground space AQECC of \eq{Heisenbergham}, we seek a parent Hamiltonian which contains the states \eq{motzkinstates} in its ground space.  

Such a parent Hamiltonian can be constructed by using the connection between classical random walks (and more generally reversible Markov chains) and stoquastic frustration free local Hamiltonians~\cite{aharonov2003adiabatic, verstraete2006criticality, bravyi2009complexity}.  The following rules applied to any pair of consecutive basis labels (with periodic boundary conditions) suffice to connect all of the basis states at each energy,
\begin{equation*}
|1, -1\rangle \leftrightarrow |0,0\rangle \quad , \quad |0, 1\rangle \leftrightarrow |1, 0\rangle \quad , \quad |0, -1\rangle \leftrightarrow |-1, 0\rangle. \label{eq:localMoves}
\end{equation*}
These local moves can be adjusted into a local Hamiltonian such that the states constructed as the uniform superposition of basis states of the same energy become the ground states:
\begin{equation}
H = \sum_{j = 1}^N \left(|F\rangle \langle F|_{j,j+1} + |U\rangle \langle U|_{j,j+1} + |D\rangle \langle D|_{j,j+1} \right),\label{eq:motzkindef}
\end{equation}
with $|F\rangle = \frac{1}{\sqrt{2}}\left(|ud\rangle - |00\rangle \right), |U\rangle = \frac{1}{\sqrt{2}}\left(|0u\rangle - |u0\rangle \right)$, $|D\rangle = \frac{1}{\sqrt{2}}\left(|0d\rangle - |d0\rangle \right)$, where the labels $-1,0,1$ is replaced by $d,0,u$. This model is called the spin-1 Motzkin chain with periodic boundary conditions (PBCs)~\cite{bravyi2012criticality,movassagh2015power}.  
Using the well-studied analytical properties ofthe ground states of these models, we prove the following proposition in the Appendix.    
\begin{prop}\label{prop:motzkin}
For any $a, b > 0$ with $5 a/2 +  b < 1/2$ the ground space of the spin 1 Motzkin model on $N$ sites with PBCs contains an $[[N, k, d, \varepsilon]]$ AQECC with $k = a \log N$ , $d = b \log N$, and $\varepsilon = \mathcal{O}\left(\frac{\log^{2} N}{N^{1/2 - 5 a/2 - b}} \right)$.
\end{prop}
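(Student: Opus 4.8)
The plan is to follow the proof of Proposition~\ref{prop:Heisenberg} almost verbatim, replacing the spin-$1/2$ Dicke-type states by their spin-$1$ analogues~\eqref{eq:motzkinstates} and binomial coefficients by trinomial coefficients. The first step is to identify the ground space of the Motzkin Hamiltonian~\eqref{eq:motzkindef}. Each of its terms is a rank-one projector onto an equal-weight \emph{difference} of two basis configurations related by one of the three local moves, and all three moves preserve the magnetization $M(\mathbf w)=\sum_i w_i$; hence each flat superposition $|g^N_m\rangle$ over the magnetization-$m$ sector is annihilated by $H$ (when a projector hits $|g^N_m\rangle$ the two induced indicator functions coincide and cancel). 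Conversely, the zero-energy space of a stoquastic frustration-free Hamiltonian of this form is spanned by the flat superpositions over the connected components of the configuration graph whose edges are the moves, so I would show that with periodic boundary conditions this graph is connected on each magnetization sector: the transport moves $0u\leftrightarrow u0$, $0d\leftrightarrow d0$ freely permute the $u$'s, $d$'s, and $0$'s, while $ud\leftrightarrow 00$ changes the number of matched $ud$ pairs, so any two configurations of the same magnetization can be joined (the standard Motzkin connectivity argument, which is only easier with PBCs). This identifies the ground space as $\operatorname{span}\{|g^N_m\rangle\}_{|m|\le N}$.

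Next I would fix the codewords: pick $L=N^a$ magnetization values $m_1<\dots<m_L$ in $(-\sqrt N,\sqrt N)$ with consecutive gaps larger than $2d=2b\log N$ — possible since $\sqrt N/(b\log N)\gg N^a$ when $a<1/2$ — and set $\mathcal C=\operatorname{span}\{|g^N_{m_\ell}\rangle\}$, so $k=a\log N$. For the off-diagonal part of~\eqref{eq:aqec}, let $E$ act on $d$ consecutive sites. Expanding $|g^N_m\rangle$ as a uniform sum over its sector, $\langle g^N_{m_i}|E|g^N_{m_j}\rangle$ is a sum of matrix elements $\langle\mathbf a'|E|\mathbf a\rangle$ over length-$d$ strings $\mathbf a,\mathbf a'$ (the remaining $N-d$ sites fixed and common), and a nonzero term forces $M(\mathbf a')-M(\mathbf a)=m_i-m_j$; since each spin-$1$ site carries magnetization in $\{-1,0,1\}$ we have $|M(\mathbf a')-M(\mathbf a)|\le 2d$, so the gap condition makes $\langle g^N_{m_i}|E|g^N_{m_j}\rangle=0$ \emph{exactly} for $i\neq j$, i.e.\ $\varepsilon_{ij}=0$ off the diagonal.

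For the diagonal entries I would show that $\langle g^N_{m_i}|E|g^N_{m_i}\rangle$ is nearly $i$-independent. Grouping by the shared magnetization $s=M(\mathbf a)=M(\mathbf a')$ of the length-$d$ block, the weight multiplying $\langle\mathbf a'|E|\mathbf a\rangle$ is $w_m(s):=c_{N-d}(m-s)/c_N(m)$, where $c_n(s)=[x^s](x^{-1}+1+x)^n$ counts length-$n$ spin-$1$ strings of total magnetization $s$ (the trinomial analogue of the binomial coefficients used for the Heisenberg model). A local central limit estimate for $c_n$ — a saddle-point/Gaussian approximation of $(x^{-1}+1+x)^n$ with variance $2/3$ per site — would give, uniformly over $|m|\le\sqrt N$ and $|s|\le d=O(\log N)$,
\[
w_m(s)=3^{-d}\bigl(1+O(\polylog(N)/\sqrt N)\bigr),
\]
hence $|\langle g^N_{m_i}|E|g^N_{m_i}\rangle-\langle g^N_{m_j}|E|g^N_{m_j}\rangle|=O(\polylog(N)/\sqrt N)$ for every $d$-local channel $E$ — equivalently, the $d$-site reduced density matrices of the codewords are $O(\polylog(N)/\sqrt N)$-close in trace norm, all approaching the reduced state of the $m=0$ flat superposition. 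Taking $C_E$ to be the $i$-average of $\langle g^N_{m_i}|E|g^N_{m_i}\rangle$ bounds $\max_{i,j}\varepsilon_{ij}$ by this quantity, and substituting into the amplification bound $\varepsilon\le 2^{2(k+d)}\max_{i,j}\varepsilon_{ij}$ of Corollary~\ref{easytostrongmapping} with $k=a\log N$, $d=b\log N$ reproduces $\varepsilon=\mathcal O(\log^2 N/N^{1/2-5a/2-b})$, the arithmetic being identical to the Heisenberg case (so that $5a/2+b<1/2$ is exactly the condition making $\varepsilon\to 0$).

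The hard part will not be any individual estimate but making the local central limit estimate for $c_n(s)$ \emph{uniform} over the full range of $m$ and $s$, with an explicit $\polylog(N)/\sqrt N$ error that survives the $2^{2(k+d)}$ amplification — this requires controlling the ratio $c_{N-d}(m-s)/c_N(m)$ through the simultaneous shift of the length and the target sum. A secondary point needing care is verifying that periodic boundary conditions introduce no extra zero-energy states, i.e.\ the purely combinatorial claim that the move graph is connected on each fixed-magnetization sector of the cycle. Both are routine given the literature on Motzkin-type spin chains, but they are where the real work sits.
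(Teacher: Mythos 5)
Your overall strategy matches the paper's: identify the ground space as the magnetization superpositions, place codewords at magnetizations spaced by more than $2d$, kill the off-diagonal Knill--Laflamme terms exactly by the selection rule on local magnetization change, and control the diagonal terms by estimating the trace distance between $d$-site reduced density matrices via a Gaussian/local-CLT approximation of the trinomial weights $w_m(s)=c_{N-d}(m-s)/c_N(m)$. Those pieces are all correct.

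The gap is quantitative, and it is not cosmetic. You bound $w_m(s)=3^{-d}\bigl(1+O(\polylog(N)/\sqrt N)\bigr)$ uniformly over $|m|\le\sqrt N$ and conclude $\max_{i,j}\varepsilon_{ij}=O(\polylog(N)/\sqrt N)$. But this deviation from $3^{-d}$ is dominated by $m^2/N$, i.e.\ by how far $m$ is from the center, whereas what you actually need is the \emph{difference} $|w_{m}(s)-w_{m'}(s)|$ between two codewords, and this difference scales \emph{linearly in $|m-m'|$}, not like $1/\sqrt N$. The paper exploits exactly this: it Taylor-expands the ratio and finds $\|\rho_d(m,N)-\rho_d(m',N)\|_1=\mathcal O\bigl(d^{3/2}|m-m'|/N\bigr)$, and then -- crucially -- packs the codewords as tightly as possible, taking $m_{\max}=d\cdot 2^k$ (consecutive magnetizations separated by exactly $2d+1$) rather than spreading them over the entire window $(-\sqrt N,\sqrt N)$ as you propose. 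This gives $\max_{i,j}\varepsilon_{ij}=\mathcal O(d^{5/2}2^k/N)$, much smaller than your $O(\polylog(N)/\sqrt N)$ when $2^k\ll\sqrt N$. Feeding your bound through Corollary~\ref{easytostrongmapping} (which, from its proof, gives $\varepsilon\le 2^{d+2k}\max_{i,j}\varepsilon_{ij}^{1/2}$) yields a constraint like $2a+b<1/4$, strictly weaker than the claimed $5a/2+b<1/2$; the assertion that "the arithmetic is identical to the Heisenberg case'' fails precisely because in the Heisenberg proof the estimate carries the factor $|m-m'|$. To repair your argument, replace the uniform $O(\polylog/\sqrt N)$ estimate with a two-point expansion that isolates the $|m-m'|$ dependence (the cross term $2(m-m')r/(N-d)$ in the exponent), and restrict $m_{\max}$ to $\Theta(d\cdot 2^k)$ rather than $\Theta(\sqrt N)$; with these two changes the rest of your plan goes through and reproduces the paper's exponent.

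Two minor points: your treatment of the off-diagonal terms and of the PBC connectivity of the magnetization sectors is fine and matches the paper (the paper simply cites known results for the latter), and your normalization $c_n(s)=|g^n_s|$ is correct.
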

The intuitive explanation and the calculations are similar to those for the Heisenberg model. These results also hold for the degenerate Heisenberg and Motzkin chains with open boundary conditions with the restriction that errors are only applied far from the endpoints of the chain.  Finally, we note that it is possible to perturb the model with a local translation invariant field in such a way that that $|g^N_0\rangle$ is the unique ground state, with an inverse polynomial gap to the first excited state~\cite{movassagh2015power}.  With this perturbation the states $|g^N_m\rangle$ gain an energy that increases with the magnitude of $m$, but which vanishes in the thermodynamic limit.  This variant of the Motzkin chain is of interest in the present context because it shows that it is possible for models with a unique ground state to be part of a code space that includes gapless excitations.

\section{Conclusions}

In this Letter we have given new examples of approximate quantum error correction against local noise in the energy eigenstates of physical systems, which goes beyond the well-studied ground states of gapped topologically ordered systems. To be more specific, we have explicitly showed that energy eigenstates packed around some finite energy density eigenstate $E$ of systems exhibiting ETH, and almost all translation invariant finite energy eigenstates of 1D translation invariant local Hamiltonians, construct approximate error correcting codes. We applied the latter result to noninteracting local Hamiltonians to map the finite-energy-density codes to the low-energy subspace of interacting Hamiltonians, eg. Heisenberg model and spin-1 Motzkin chain. We studied the ground states of these models with periodic boundary conditions and further detailed the parameters of the approximate error correcting code that can be found in their low energy.

One can interpret our results from many perspectives. One perspective may be that it is not unusual to find error correcting codes in physical systems; it is indeed a generic phenomena as shown by our results of AQECC from systems with ETH and translation invariance. Another point of view which builds upon the first one is that even though error correcting codes can be found easily in Hamiltonian systems, their varying performance under different types of errors may be a way to characterize different properties of these physical systems. For example, the Motzkin spin-1 model that we analyzed is gapless, however the gap closes as $O(N^{-2})$ on the contrary to $O(N^{-1})$ observed in 1D lattice models whose critical points are effectively described by CFTs. To pursue its potential relevance to AdS/CFT(-like correspondence), one shall follow \cite{almheiri2015bulk,Harlow2017} where certain properties of AdS/CFT such as radial commutativity, subregion duality and Ryu-Takayanagi formula have been matched to operator algebra quantum error correcting codes.

There are numerous other questions one can ask building upon our work. Hence, our results shall best be taken as a first step to elucidate the role of error correcting codes in physical systems, from topological order to ETH, AdS/CFT, and gapless quantum systems. The performance of these codes under specific noise channels must be intimately connected to the physical properties manifested by these systems.

\section{Acknowledgments}

We thank Xi Dong, Tarun Grover,  Nick Hunter-Jones, Robert Koenig, John Preskill for discussions. E.C. is grateful for support provided by the Institute for Quantum Information and Matter, with support of the Gordon and Betty Moore Foundation (GBMF-12500028). M.B.S. acknowledges the support from Simons Qubit fellowship provided by Simons Foundation through It from Qubit collaboration. F.B, E.C. and M.B.S. were supported by an NSF Physics Frontiers Center (NSF Grant PHY-1125565). This research was supported in part by the National Science Foundation under Grant No. NSF PHY-1125915.

\bibliographystyle{unsrt}
\bibliography{main}
\pagebreak
\clearpage
\widetext

\appendix

\section{Approximate Quantum Error Correcting Codes}

Here we give a brief description of the features of approximate quantum error correction that we use. We follow closely \cite{beny2010general}. We say that a subspace ${\cal C}$ of a $2^N$-dimensional vector space of $N$ qubits arranged in a line is a $[[N, k, d, \varepsilon]]$ approximate quantum error correction code (AQECC) against $d$-local errors if $\text{dim}({\cal C}) = 2^k$ and for every channel $\Lambda$ acting on at most $d$ consecutive qubits, we have
\begin{equation} \label{AQECC-appendix}
	\min_{\ket{\psi} \in {\cal C}^{\otimes 2}} \max_{{\cal D}}  \bra{\psi}({\cal D} \circ {\cal{N}} \otimes I)(\ket{\psi}\bra{\psi}) \ket{\psi} \geq 1 - \varepsilon,
\end{equation}
where the maximum is over decoding channels ${\cal D}$, and the minimum over pure entangled states acting on ${\cal C}$ and a reference system (which altogether we denoted by ${\cal C}^{\otimes 2}$ above). In words, the condition above says that one can correct, up to error $\varepsilon$, the effect of any noise on at most $d$ neighboring qubits. If \ref{AQECC-appendix} only works for a particular noise channel $\cal N$, we say that the code is $\varepsilon$-correctable under $\cal N$. 

For exact quantum error correction, Knill and Laflamme gave a convenient set of necessary and sufficient conditions for a code being able to correct a noisy channel ${\cal N}(X) = \sum_{k} E_k X E_{k}^{ \cal y}$. Similar conditions for the approximate case were found by Beny and Oreshkov \cite{beny2010general}. 

For two channels ${\cal N}$ and ${\cal M}$, Let $d({\cal N}, {\cal M}) = \sqrt{1 - F({\cal N}, {\cal M})}$ be the Bures metric, where the fidelity of the two channels $F({\cal N}, {\cal M})$ is defined as follows:
\begin{equation}
F({\cal N}, {\cal M}) := \max_{\ket{\psi}} F( I \otimes {\cal N}(\ket{\psi}\bra{\psi}), I \otimes {\cal M}(\ket{\psi}\bra{\psi})),
\end{equation}
with the maximiztion over all bipartite states $\ket{\psi}$ of the input of the channel and a vector space isomorphic to it. 

Then  we have:
\begin{prop} \label{BOKnillLaflammecond}
	[Corollary 2 of \cite{beny2010general}] A code defined by the projector $P$ is $\varepsilon$-correctable under a noise channel $\Lambda$ if and only if
	\begin{equation} \label{approximateKL}
		PE_{i}^{\cal y} E_{j} P = \lambda_{ij} P + P B_{ij}P, 
	\end{equation}
	where $\lambda_{ij}$ are the components of a density operator, and $d({\cal N} + {\cal B}, {\cal N}) \leq \varepsilon$, where ${\cal N}(\rho) = \sum_{i, j} \lambda_{ij} \tr(\rho) \ket{i}\bra{j}$ and $({\cal N} + {\cal B})(\rho) = {\cal N}(\rho) + \sum_{i, j} \tr(\rho B_{ij})\ket{i}\bra{j}$.
\end{prop}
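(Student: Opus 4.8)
\emph{Proof sketch.}\ The plan is to reduce the statement to the complementarity (information--disturbance) principle for quantum channels: a noise channel $\Lambda$ can be corrected on a subspace precisely when its complementary channel to the environment, restricted to that subspace, carries essentially no information about the input, i.e.\ is close to a constant channel. Write $\mathcal{E}$ for the encoding channel into the code, fix the Stinespring dilation $\Lambda(X)=\tr_{\mathrm{env}}\big(W X W^\dagger\big)$ with $W=\sum_i E_i\otimes\ket{i}$, and let $\Lambda^{c}(X)=\tr_{\mathrm{sys}}\big(W X W^\dagger\big)=\sum_{i,j}\tr(E_i^\dagger E_j X)\,\ket{j}\bra{i}$ be the complementary channel. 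The Kretschmann--Schlingemann--Werner information--disturbance theorem, which is the engine behind Corollary~2 of \cite{beny2010general}, supplies the identity
\begin{equation*}
\max_{\mathcal{D}} F\big(\mathcal{D}\circ\Lambda\circ\mathcal{E},\ \mathrm{id}\big)\;=\;\max_{\sigma} F\big(\Lambda^{c}\circ\mathcal{E},\ \rho\mapsto \tr(\rho)\,\sigma\big),
\end{equation*}
so that $\varepsilon$-correctability of the code under $\Lambda$ is \emph{equivalent} to the existence of a state $\sigma$ with $d\big(\Lambda^{c}\circ\mathcal{E},\ \rho\mapsto\tr(\rho)\sigma\big)\le\varepsilon$ in the Bures metric.

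It then remains to rewrite this decoupling condition in the algebraic form of the proposition. Restricting the complementary channel to the code, its only relevant data are the compressions $P E_i^\dagger E_j P$, since $(\Lambda^{c}\circ\mathcal{E})(\rho)$ depends on a state $\rho$ supported on the code only through the numbers $\tr\big(P E_i^\dagger E_j P\,\rho\big)$. Substituting the decomposition $P E_i^\dagger E_j P=\lambda_{ij}P+P B_{ij}P$ and using $\tr(P\rho)=\tr(\rho)$ for such $\rho$, one identifies (up to the obvious relabelling of the environment register) the restricted complementary channel with $\mathcal{N}+\mathcal{B}$, and its scalar part with the constant channel $\mathcal{N}(\rho)=\tr(\rho)\,\sigma$, $\sigma=[\lambda_{ij}]$; complete positivity and trace preservation of $\Lambda^{c}\circ\mathcal{E}$ force $\sigma$ to be a genuine density operator for a suitable choice of the splitting. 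Hence $d(\mathcal{N}+\mathcal{B},\mathcal{N})\le\varepsilon$ says exactly that $\Lambda^{c}\circ\mathcal{E}$ is $\varepsilon$-close to a constant channel, and the two directions of the information--disturbance equivalence give the ``only if'' and the ``if'' of the proposition.

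The delicate step --- the one I would spell out in full rather than merely quote --- is the information--disturbance identity with the \emph{exact} Bures constant (a genuine ``if and only if'' with no loss in $\varepsilon$), rather than a two-sided estimate. A clean derivation requires (i) Uhlmann's theorem to pass between channel fidelity and the fidelity of Stinespring isometries or purifications; (ii) the essential uniqueness of the dilation up to a partial isometry on the environment, which converts a near-decoupling of the environment into a near-optimal decoder and conversely; and (iii) a matching of quantifiers, since the decomposition $P E_i^\dagger E_j P=\lambda_{ij}P+P B_{ij}P$ is far from unique and the ``there exist $\lambda_{ij},B_{ij}$'' on the algebraic side must be aligned with the ``there exists $\sigma$'' on the operational side. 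All of this is precisely the content of \cite{beny2010general}, so in the body of the paper one simply invokes it; the outline above records how the argument is reconstructed from the complementarity principle.
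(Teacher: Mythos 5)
The paper does not give its own proof of this proposition; it is quoted verbatim from Corollary~2 of B\'eny--Oreshkov \cite{beny2010general}, and the burden is delegated to that reference. Your reconstruction follows the same complementary-channel route that B\'eny--Oreshkov actually take: express $\varepsilon$-correctability via the exact information--disturbance duality
\begin{equation*}
\max_{\mathcal{D}} F\big(\mathcal{D}\circ\Lambda\circ\mathcal{E},\ \mathrm{id}\big)\;=\;\max_{\sigma} F\big(\Lambda^{c}\circ\mathcal{E},\ \rho\mapsto \tr(\rho)\,\sigma\big),
\end{equation*}
then read off the approximate Knill--Laflamme form by noting that the restriction of $\Lambda^{c}$ to the code is determined entirely by the compressions $PE_i^\dagger E_j P$, and splitting these into a scalar piece $\lambda_{ij}P$ and a remainder $PB_{ij}P$. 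You also correctly single out the three genuinely delicate points: (i) that one needs the \emph{exact} Bures-distance equality, not the original Kretschmann--Schlingemann--Werner two-sided estimate with its constant-factor loss --- obtaining this sharpening is precisely B\'eny--Oreshkov's Theorem~1; (ii) that the equivalence between a near-constant complementary channel and a near-optimal decoder goes through Uhlmann's theorem and the essential uniqueness of the Stinespring dilation; and (iii) that the splitting $PE_i^\dagger E_j P = \lambda_{ij}P + PB_{ij}P$ is far from unique, so the existential quantifiers over $(\lambda,B)$ and over $\sigma$ must be carefully matched. This is a faithful account of the argument the paper is implicitly relying on, so nothing is missing relative to what the paper itself does.
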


In the proposition the projector $P$ is the projector onto the support of the subspace ${\cal C}$ defining the code. An easy consequence of Proposition \ref{BOKnillLaflammecond} is the following:
\begin{cor} \label{easytostrongmapping}
Let  $\{ |\psi_1\rangle, ... , |\psi_{2^k} \rangle \}$ be an orthogonal set of states in $\mathbb{C}^{2^N}$ such that for all $i,j$ and any $d$-local operator $E$,
\begin{equation}
\langle \psi_i | E | \psi_j\rangle = C_{E} \delta_{ij} + \varepsilon_{ij}, \label{eq:aqec2}
\end{equation}
with $C_E$ a constant (only depending on $E$). Then ${\cal C} := \text{span}\{ |\psi_1\rangle, ... , |\psi_{2^k} \rangle \}$ forms a $[[N, k, d, 2^{d+2k} \max_{i, j} \varepsilon_{ij}^{1/2}]]$ AQECC.
\end{cor}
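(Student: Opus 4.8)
The plan is to verify the Beny--Oreshkov approximate Knill--Laflamme condition \eqref{approximateKL} for the projector $P$ onto $\mathcal C$ and then invoke Proposition~\ref{BOKnillLaflammecond}. Fix a channel $\Lambda$ acting on $d$ consecutive qubits, with Kraus decomposition $\Lambda(X) = \sum_i E_i X E_i^\dagger$; since each $E_i$ is supported (up to identity) on those $d$ qubits, every product $E_i^\dagger E_j$ is a $d$-local operator. Writing $P = \sum_{a=1}^{2^k}\ket{\psi_a}\bra{\psi_a}$ (taking the $\ket{\psi_a}$ orthonormal), the hypothesis \eqref{eq:aqec2} applied to $E = E_i^\dagger E_j$ gives $\bra{\psi_a}E_i^\dagger E_j\ket{\psi_b} = C_{ij}\delta_{ab} + \varepsilon^{(ij)}_{ab}$, with $C_{ij} := C_{E_i^\dagger E_j}$ and $|\varepsilon^{(ij)}_{ab}| \le \epsilon := \max_{i,j}\varepsilon_{ij}$. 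Sandwiching $E_i^\dagger E_j$ between the two copies of $P$ then produces exactly the form $P E_i^\dagger E_j P = C_{ij} P + P B_{ij} P$, with $B_{ij} := \sum_{a,b}\varepsilon^{(ij)}_{ab}\ket{\psi_a}\bra{\psi_b}$, so that $\|B_{ij}\|_\infty \le \|B_{ij}\|_F \le 2^k \epsilon$.

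Next I would check that the scalars $C_{ij}$ are, up to a small correction, the entries of a genuine density operator, as Proposition~\ref{BOKnillLaflammecond} requires. Hermiticity of $(C_{ij})$ follows from $(E_i^\dagger E_j)^\dagger = E_j^\dagger E_i$; positivity up to an error of order $4^d\epsilon$ follows by testing against a vector $v$ and recognizing that $\sum_{ij}\bar v_i v_j \bra{\psi_a}E_i^\dagger E_j\ket{\psi_a} = \bigl\| \sum_j v_j E_j\ket{\psi_a}\bigr\|^2 \ge 0$ while the $\varepsilon^{(ij)}_{aa}$ correction is at most $4^d\epsilon\|v\|^2$; and $\sum_i C_{ii}$ lies within $2^k\epsilon$ of $1$ because $\sum_i E_i^\dagger E_i = I$. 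Replacing $(C_{ij})$ by the nearest bona fide density operator $(\lambda_{ij})$ and absorbing the difference (a matrix of small trace norm, times $P$) into the $B_{ij}$ yields a decomposition of the form \eqref{approximateKL} with $(\lambda_{ij})$ a legitimate density operator and $\|B_{ij}\|$ still of order $2^k\epsilon$ up to dimension-dependent factors.

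It then remains to bound the Bures distance $d(\mathcal N + \mathcal B, \mathcal N)$ from Proposition~\ref{BOKnillLaflammecond}, where $\mathcal N(\rho) = \sum_{ij}\lambda_{ij}\tr(\rho)\ket{i}\bra{j}$ and $\mathcal B(\rho) = \sum_{ij}\tr(\rho B_{ij})\ket{i}\bra{j}$, both viewed as maps on the $2^k$-dimensional code space with output dimension at most $4^d$ (the number of Kraus indices). Since $d(\mathcal N + \mathcal B, \mathcal N)^2 = 1 - F(\mathcal N + \mathcal B, \mathcal N)$, it suffices to show $F(\mathcal N + \mathcal B, \mathcal N) \ge 1 - (2^{d+2k}\sqrt{\epsilon})^2$; I would obtain this by feeding an arbitrary purification of a code-space state into both channels and estimating the resulting state fidelity, using $|\tr(\rho B_{ij})| \le \|B_{ij}\|_\infty = O(2^k\epsilon)$ together with the bound $\le 4^d$ on the number of $(i,j)$ pairs to control the perturbation $\mathcal B$. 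Since this holds for \emph{every} channel $\Lambda$ supported on $d$ consecutive qubits, Proposition~\ref{BOKnillLaflammecond} shows that $\mathcal C$ is $\varepsilon$-correctable under all such $\Lambda$ with $\varepsilon = 2^{d+2k}\max_{i,j}\varepsilon_{ij}^{1/2}$, i.e.\ $\mathcal C$ is a $[[N,k,d,2^{d+2k}\max_{i,j}\varepsilon_{ij}^{1/2}]]$ AQECC. The main obstacle is this last step: turning the entrywise bound $\epsilon$ on the $B_{ij}$ into a channel-fidelity bound with the stated dimensional prefactor, which is delicate because $\mathcal B$ is not completely positive and the channel fidelity is defined through an ancillary reference system; a secondary subtlety is guaranteeing that $(\lambda_{ij})$ can be taken to be an \emph{exact} density operator without degrading the estimates.
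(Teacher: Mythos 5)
Your overall structure matches the paper's: reduce to the Beny--Oreshkov condition \eqref{approximateKL}, identify the $\lambda_{ij}$ and $B_{ij}$, and then control the Bures distance between the two induced channels. But the two ``obstacles'' you flag at the end are real gaps in your argument, and the paper sidesteps both of them with specific choices you did not make.

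First, you chose $\lambda_{ij} := C_{E_i^\dagger E_j}$ and then had to argue that this matrix is \emph{approximately} a density operator, proposing to round it to the nearest genuine density operator and absorb the correction into $B_{ij}$. That step is unnecessary if you instead set
\[
\lambda_{ij} := \langle \psi_1 | E_i^\dagger E_j | \psi_1 \rangle,
\]
which the paper does. This is the Gram matrix of the vectors $\{E_j\ket{\psi_1}\}_j$, hence Hermitian and positive semidefinite \emph{exactly}, and it has unit trace because $\sum_i E_i^\dagger E_i = I$. The corresponding $B_{ij}$ is then precisely the remainder: the off-diagonal matrix $\langle\psi_l| E_i^\dagger E_j |\psi_k\rangle$ for $l\neq k$ together with the diagonal differences $\langle\psi_k| E_i^\dagger E_j |\psi_k\rangle - \langle\psi_1| E_i^\dagger E_j |\psi_1\rangle$, all of which are $O(\varepsilon)$ by hypothesis \eqref{eq:aqec2}. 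Your $C_{ij}$ and the paper's $\lambda_{ij}$ differ only by the scalar $\varepsilon^{(ij)}_{11}$, but that small difference is exactly what forces your extra rounding argument; the paper's choice makes the Knill--Laflamme scalar matrix a bona fide density operator with no patching.

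Second, for the Bures estimate you propose a direct fidelity computation via purifications, and you correctly identify it as delicate because $\mathcal{B}$ is not CP and the reference system enters. The paper instead invokes two black-box facts: (1) the Bures metric is dominated by (the square root of) the trace norm; (2) passing from entanglement fidelity over $\mathcal{C}^{\otimes 2}$ to a per-state trace-norm bound on $\mathcal{C}$ costs only a factor $2^k$ (Lemma~23 of \cite{hayden2012weak}). This immediately gives $d(\mathcal{N}+\mathcal{B},\mathcal{N}) \le 2^{k}\,\Vert\mathcal{B}\Vert_1^{1/2}$. Then $\Vert\mathcal{B}\Vert_1 \le \sum_{i,j}\Vert B_{ij}\Vert_1 \le 2^{2d}\max_{i,j}\Vert B_{ij}\Vert_1$, and each $\Vert B_{ij}\Vert_1$ is bounded entrywise by $2^{2k}\max\varepsilon_{kl}$, which assembles to $2^{d+2k}\varepsilon^{1/2}$ without ever touching a purification. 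If you want to keep your direct-fidelity route, you would need to supply the analogue of these two dimension/monotonicity lemmas yourself; as written, that part of your argument is a sketch rather than a proof. With the two substitutions above, your proposal becomes the paper's proof.
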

\begin{proof}
Let $P := \sum_{l=1}^{2^k} | \psi_l \rangle \langle \psi_l |$.  We find Eq. (\ref{approximateKL}) to be true with $\lambda_{ij} = \langle \psi_1 | E_i E_j | \psi_1 \rangle$ and
\begin{equation}
B_{ij} = \sum_{k \neq l} \langle \psi_l | E_i E_j | \psi_k \rangle | \psi_l \rangle \langle \psi_k | + \sum_{k} \left(\langle \psi_k  | E_i E_j  |  \psi_k \rangle -  \langle \psi_1  | E_i E_j  | \psi_1 \rangle \right)| \psi_k \rangle \langle \psi_k |.  
\end{equation}

We now note two facts: (1) the Bures metric is upper bounded by the trace norm; and (2) for every two channels $\max_{\psi \in {\cal C}^{\otimes 2}} \Vert I \otimes {\cal N}(\ket{\psi}\bra{\psi}), I \otimes {\cal M}(\ket{\psi}\bra{\psi} \Vert_1 \leq 2^{k} \max_{\psi \in {\cal C}} \Vert {\cal N}(\ket{\psi}\bra{\psi}), {\cal M}(\ket{\psi}\bra{\psi}) \Vert_1$  (see Lemma 23 of \cite{hayden2012weak}). Then

\begin{equation}
d({\cal N} + {\cal B}, {\cal N}) \leq 2^{k} \Vert {\cal B} \Vert_1^{1/2} \leq 2^{k} \left( \sum_{i, j}^{2^d} \Vert B_{ij} \Vert_1 \right)^{1/2} \leq 2^{d+k} \max_{i, j} \Vert B_{ij} \Vert_1^{1/2}. 
\end{equation}

From Eq. (\ref{eq:aqec2}) we can bound the square of the latter as follows
\begin{equation}
\Vert B_{ij} \Vert_1 \leq \sum_{k \neq l}^{2^k}  \varepsilon_{kl} + \sum_{k=1}^{2^k} \varepsilon_{kk} \leq 2^{2k} \varepsilon.  
\end{equation}
\end{proof}

\section{Codes from Translation-Invariance} \label{sec:proofTI}

Here we give a proof of Theorem~\ref{thm:main}.

\subsubsection{Diagonal Elements are Close}

We say a state $\rho$ on a finite dimensional lattice has correlation length $\xi$ if
\begin{equation}
\max_{X, Z} \frac{\tr(\rho X \otimes Z) - \tr(\rho X) \tr(\rho Z)}{ \Vert X \Vert \Vert Z \Vert} \leq \exp(- \text{dist}(X, Z)/\xi),
\end{equation}
with the maximimization over all Hermitian matrices $X, Z$. 

The next lemma, due to Anshu \cite{anshu2016concentration}, gives a large deviation principle for the measurement of the energy, according to a local Hamiltonian $H$, on a state with a finite correlation length. 

\begin{lem}  \label{largedeviation}
[Theorem 1.1 of \cite{anshu2016concentration}] Let $\rho$ be a quantum state with correlation length $\xi$ and $\left< H \right >_{\rho} := \tr(\rho H)$ be the average energy of $\rho$. Let $\Pi_{\geq f}$ be the projection onto the eigenspace of $H$ with eigenvalues $\geq f$. 

For $a \geq \sqrt{  \frac{\exp(c_1 D \log(k))}{N \xi}  }$ it holds that
\begin{equation}
\tr( \rho \Pi_{\geq \left <  H  \right >_{\rho} + Na} ) \leq  e^{\frac{2 D k}{\xi}} e^{-  c_2 \frac{(N a^2 \xi)^{\frac{1}{D+1}}}{D \xi}}.
\end{equation}
for an universal constants $c_1, c_2$.
\end{lem}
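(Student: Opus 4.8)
The plan is to prove this concentration bound by an operator Chernoff estimate whose moment generating function is controlled through the correlation length. First I would pass to exponential moments: for any $t\ge 0$, the operator inequality $\Pi_{\ge \langle H\rangle_\rho + Na}\le e^{t(H-\langle H\rangle_\rho-Na)}$ together with positivity of $\rho$ gives
\[
\tr\!\big(\rho\,\Pi_{\ge \langle H\rangle_\rho + Na}\big)\le e^{-tNa}\,\tr\!\big(\rho\, e^{t(H-\langle H\rangle_\rho)}\big),
\]
so it suffices to control $\Lambda(t):=\log\tr(\rho\,e^{t(H-\langle H\rangle_\rho)})$ and then optimize $t>0$. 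Since $H$ commutes with $e^{tH}$, the tilted operator $\rho_t:=e^{tH/2}\rho\,e^{tH/2}/\tr(e^{tH}\rho)$ is a genuine state, $\Lambda$ is convex with $\Lambda(0)=\Lambda'(0)=0$ and $\Lambda''(t)=\mathrm{Var}_{\rho_t}(H)$; hence $\Lambda(t)\le\frac{t^2}{2}\sup_{0\le s\le t}\mathrm{Var}_{\rho_s}(H)$, and everything reduces to bounding the variance of the extensive observable $H$ in the tilted states $\rho_s$.

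If $\rho_s$ itself had correlation length $O(\xi)$ the rest would be routine: writing $H=\sum_i h_i$ with $O(1)$-norm terms, $\mathrm{Var}_{\rho_s}(H)=\sum_{i,j}\big(\tr(\rho_s h_ih_j)-\tr(\rho_s h_i)\tr(\rho_s h_j)\big)$, and summing the correlation-length estimate $e^{-\mathrm{dist}(i,j)/\xi}$ over $j$ at fixed $i$ gives $\mathrm{Var}_{\rho_s}(H)=O(N\xi^{D})$, hence ordinary Gaussian concentration. The catch, and the main obstacle, is that tilting degrades the correlation length, so the hypothesis cannot be fed to $\rho_s$ directly. To get around this I would use a shielding / quasi-factorization argument on $\rho$ itself (which does satisfy the hypothesis with constant $\xi$): coarse-grain the lattice into boxes of linear size $\ell$, and show that the reduced state of $\rho_s$ on a box $B$ agrees, up to a multiplicative factor $1+O\!\big(e^{-\ell/\xi}e^{O(s\,\poly(\ell))}\big)$, with the state obtained by tilting only with the terms of $H$ within distance $\ell$ of $B$ — because the far part of $e^{sH}$ can be commuted across the width-$\ell$ annulus around $B$ at the cost of one correlation-length factor. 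Iterating over the $N/\ell^{D}$ boxes quasi-factorizes $\tr(\rho\,e^{sH})$ into per-box traces $\tr_B(\rho_B\,e^{sH_B})\le e^{s\langle H_B\rangle_\rho+O(s^2\ell^{2D})}$; the box Hilbert space has dimension $k^{\ell^{D}}$, which is how the local dimension $k$ enters the prefactors. This leaves $\Lambda(s)=O(N s^2\ell^{D})+O\!\big(\tfrac{N}{\ell^{D}}e^{-\ell/\xi}e^{O(s\,\poly(\ell))}\big)$.

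Substituting into the Chernoff bound and optimizing jointly over $t>0$ and the block size $\ell$ — take $\ell$ of order $\xi\log(1/t)$ or larger to make the shielding penalty subdominant, then choose $t$ to balance the in-block term $Nt^2\ell^{D}$ against the gain $tNa$ — produces the stretched-exponential rate; a more careful accounting of the quasi-factorization losses is what turns the naive Gaussian exponent into the power $(Na^2\xi)^{1/(D+1)}$, together with the prefactor $e^{O(Dk/\xi)}$ and the admissibility threshold $a\ge\sqrt{\exp(c_1D\log k)/(N\xi)}$ — precisely the regime in which the optimal $(t,\ell)$ is legal, below which $Na$ sits inside the intrinsic fluctuation window and no concentration can hold. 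The technical heart, and where I expect the real work to lie, is the shielding estimate: one must show that severing the tilt operator $e^{sH}$ along a thick annulus distorts expectations on the enclosed region only \emph{multiplicatively}, by $e^{-\ell/\xi}$, rather than by an additive error that explodes under exponentiation, so that the $N/\ell^{D}$ box errors multiply and can be reabsorbed into the exponent instead of piling up catastrophically — controlling the unphysical, potentially large-norm tilt operator while staying inside the range where the correlation-length hypothesis is usable is the crux of the whole argument.
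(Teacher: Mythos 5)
This statement is not proved in the paper at all; it is imported verbatim as Theorem~1.1 of Anshu's concentration paper \cite{anshu2016concentration}, so there is no ``paper's own proof'' to compare against. What you have written is a blind attempt to re-derive that external result, and it should be judged on those terms.

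Your high-level frame — exponential Markov, some form of block decomposition with the block size $\ell$ as a free parameter, use the correlation decay of $\rho$ to quasi-factorize across blocks, then jointly optimize over $t$ and $\ell$ to land on the stretched exponent $(Na^2\xi)^{1/(D+1)}$ — is the correct shape and is broadly the shape of Anshu's argument too. But the concrete route you propose has a gap at exactly the step you flag as the crux, and I do not think it can be closed in the form you describe. You reduce to bounding $\mathrm{Var}_{\rho_s}(H)$ for the tilted states $\rho_s\propto e^{sH/2}\rho\,e^{sH/2}$ and then try to recover locality of $\rho_s$ by ``commuting the far part of $e^{sH}$ across a width-$\ell$ annulus at the cost of one correlation-length factor.'' The correlation-length hypothesis is a statement about connected correlators of $\rho$ between fixed local observables; it says nothing about the operator $e^{sH}$, which is an imaginary-time evolution and has no Lieb-Robinson cone. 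There is no reason the error in replacing $e^{sH}$ by a locally supported tilt is multiplicative of size $e^{-\ell/\xi}e^{O(s\,\poly(\ell))}$; generically the splitting error for $e^{s(H_A+H_B)}$ versus $e^{sH_A}e^{sH_B}$ is controlled by commutators of unbounded effective range, and for the values of $s=t$ you need in the Chernoff optimization this term is not small. Because you never actually carry out the optimization, the claimed emergence of the $1/(D+1)$ exponent and the admissibility threshold is asserted rather than derived, and it rests entirely on the unsupported shielding estimate.

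The proof in \cite{anshu2016concentration} avoids this trap: it never characterizes the tilted state $\rho_s$ and never bounds a variance of $H$ in $\rho_s$. Instead it works directly with the scalar moment generating function $\tr(\rho\,e^{\lambda H})$ and proves a recursive quasi-factorization of that \emph{number} across cuts of the lattice, using only the correlation decay of $\rho$ between bounded local operators (which is exactly what the hypothesis supplies) plus Hoeffding-type bounds for the per-block contributions. That is where the local dimension $k$, the block linear size, and ultimately the $(Na^2\xi)^{1/(D+1)}$ exponent and the lower threshold on $a$ come from. So: right target, right Chernoff-plus-blocking silhouette, but the central technical device you propose (locality of $e^{sH}$ via the correlation length of $\rho$) is not available, and without it the sketch does not constitute a proof.
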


Given a Hamiltonian $H$ with spectral decomposition $H = \sum_l E_l \ket{E_l} \bra{E_l}$, let $S_E$ be the the set of eigenvalues close to $E$: 
\begin{equation} \label{S}
S_E := \{ E_k :   E_k \in [E - \sqrt{N}, E + \sqrt{N}]  \}.
\end{equation}
Define the microcanonical state of energy $E$ as 
\begin{equation}
\tau_{E} := \frac{1}{|S_E|} \sum_{k : E_k \in S_{E}} \ket{E_k}\bra{E_k}.
\end{equation}
We note that in one dimension the correlation length is a function of mean energy $e := E/N$ only, and it is a constant independent of system size when $e$ is a contant as well. The choice $2 \sqrt{N}$ for the energy window is arbitrary. All we need is that the associated microcanonical ensemble has finite correlation length, which is true as long as it is subextensive and larger than $\polylog(N)$ \cite{brandao2015equivalence}. 

The following is a quantitative version of the main result of \cite{mori2016weak}, which established a weak version of ETH (only concerned with diagonal elements and only applying to most eigenstates). Using the large deviation principle of Lemma \ref{largedeviation} we can give a finite version of it, with error bounds (in contrast, the result of \cite{mori2016weak} concerns asymptotics). It is here that the assumption of having translation-invariant eigenstates is used.

\begin{prop}   \label{Moriquantitative}

There is a constant $0 < \alpha < 1/2$ such that the following holds. Let $H$ be a 1D local Hamiltonian on $N$ qubits and $O$ be an observable acting non-trivially only on a connected region of length $N^{\alpha}$. Then for any $\delta \geq 1/N$,
\begin{equation}
\Pr_{\ket{E_k} \in S_{E}}  \left( \bra{E_k} O \ket{E_k}  \geq  \delta   \right)  \leq \exp\left( - c \delta N^{1/2} / \xi^{3/2}  \right),
\end{equation}
for a universal constant $c$, with $\xi$ the correlation length of the microcanonical state $\tau_{E}$ of energy $E$. 
\end{prop}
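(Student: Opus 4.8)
My plan is to follow the strategy of \cite{mori2016weak} — use translation invariance to trade the local observable for an extensive one, then bound the fraction of ``outlier'' eigenstates by a large deviation estimate — but to replace the asymptotic variance argument there with a genuine exponential tail obtained from Lemma~\ref{largedeviation}. \emph{Step 1 (reduce to an extensive observable):} since each $\ket{E_k}$ is translation invariant, for every lattice shift $T^j$ the operator $T^j O T^{-j}$ has the same expectation in $\ket{E_k}$ as $O$; hence, setting $\bar O := \tfrac1N\sum_{j=0}^{N-1} T^j O T^{-j}$, one has $\bra{E_k}O\ket{E_k}=\bra{E_k}\bar O\ket{E_k}$ for all $k$, where $\bar O$ is translation invariant, $\Vert\bar O\Vert\le\Vert O\Vert$, and $N\bar O$ is a sum of $N$ translates of a term of range $N^\alpha$. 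After rescaling we may take $\Vert O\Vert\le 1$, and by replacing $O$ with $O-\tr(\tau_E O)\,I$ we may assume the microcanonical mean of $\bar O$ vanishes; the stated one-sided bound then also gives the two-sided ``thermality'' statement needed in Theorem~\ref{thm:main}, by applying it to $\pm O$.

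\emph{Step 2 (exponential Markov):} for $t>0$, convexity of $x\mapsto e^{tx}$ (Jensen's inequality for the spectral measure of $\bar O$ in the state $\ket{E_k}$) gives $e^{t\bra{E_k}\bar O\ket{E_k}}\le\bra{E_k}e^{t\bar O}\ket{E_k}$, so averaging uniformly over $k$ with $E_k\in S_E$,
\begin{equation*}
\Pr_{\ket{E_k}\in S_E}\!\big(\bra{E_k}O\ket{E_k}\ge\delta\big)\le e^{-t\delta}\,\frac1{|S_E|}\sum_{k}\bra{E_k}e^{t\bar O}\ket{E_k}=e^{-t\delta}\,\tr\!\big(\tau_E\,e^{t\bar O}\big).
\end{equation*}
The task is now to bound the moment generating function $\tr(\tau_E e^{t\bar O})$ of the extensive observable $N\bar O$ in the finite-correlation-length state $\tau_E$.

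\emph{Step 3 (large deviations — the main obstacle):} treat $N\bar O$ as a translation-invariant ``Hamiltonian'' with terms of range $N^{\alpha}$ and apply Lemma~\ref{largedeviation}, and its analogue for $-N\bar O$, to $\rho=\tau_E$; this is legitimate since that lemma uses only locality of the operator, not that it generates the dynamics. Writing $\mu$ for the (centered) distribution of the outcome of measuring $\bar O$ in $\tau_E$, Lemma~\ref{largedeviation} yields a two-sided tail of the form $\mu(\{|s|\ge a\})\le e^{O(N^{\alpha}/\xi)}\exp\!\big(-c\,a\,(N/\xi)^{1/2}\big)$, valid once $a$ exceeds a threshold that vanishes with $N$; integrating $\int e^{ts}\,d\mu(s)$ against this bounds the MGF by $e^{O(N^{\alpha}/\xi)}$ uniformly for $t$ up to a constant multiple of $(N/\xi)^{1/2}$. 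The delicate points are: (i) confirming that $\alpha<1/2$ keeps the range-induced prefactor $e^{O(N^{\alpha}/\xi)}$ negligible against the eventual decay; (ii) handling the threshold on $a$ — this is where the hypothesis $\delta\ge 1/N$ is used, the bound being essentially trivial below that scale; and (iii) carrying the $\xi$-dependence through the prefactor, the threshold, and the tail rate so that the final exponent comes out proportional to $\delta\,N^{1/2}\xi^{-3/2}$ (one has room to spare here, since weakening the exponent by powers of $\xi\ge1$ only makes the upper bound larger).

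\emph{Step 4 (optimize $t$):} combining Steps 2--3 gives, for $t$ in the admissible range,
\begin{equation*}
\Pr_{\ket{E_k}\in S_E}\!\big(\bra{E_k}O\ket{E_k}\ge\delta\big)\le e^{O(N^{\alpha}/\xi)}\,e^{-t\delta},
\end{equation*}
and choosing $t$ as large as permitted while using $\delta\ge 1/N$ and $\alpha<1/2$ to dominate the prefactor, the right-hand side becomes $\exp\!\big(-c\,\delta\,N^{1/2}/\xi^{3/2}\big)$, as claimed. An essentially equivalent route replaces Step 2 by Markov's inequality on even moments $\tr(\tau_E\bar O^{m})$ and Step 3 by the corresponding moment bounds extracted from Lemma~\ref{largedeviation}, optimizing over $m$; I would adopt whichever yields cleaner constants. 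The crux throughout is Step 3 — extracting a clean, $\xi$-explicit MGF/moment bound for the extensive observable whose local terms have growing range $N^{\alpha}$ — while Steps 1, 2 and 4 are short and essentially forced.
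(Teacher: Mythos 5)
Your proposal follows essentially the same route as the paper: translation invariance to convert $O$ to the extensive $\overline{O'}$, exponential Markov plus Jensen to reduce to bounding $\langle e^{\lambda\overline{O'}}\rangle_{\tau_E}$, and then Lemma~\ref{largedeviation} applied with $\overline{O'}$ playing the role of the "Hamiltonian" in the state $\tau_E$ to control the tail, with $\lambda$ tuned to $\Theta(\xi^{\cdot}N^{-1/2})$ at the end. The paper executes your Step~3 by splitting the spectral sum of $\overline{O'}$ at $\delta N/2$ rather than integrating the tail against $e^{ts}$, and it needs only the one-sided tail of Lemma~\ref{largedeviation} (the $s<0$ part of the MGF integral is trivially $\le 1$ for $t>0$), but these are cosmetic rather than structural differences.
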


\begin{proof}
For a local observable $O$, define $O' := O -  \left <   O  \right >_{\text{MC}(E)} I$, with the average over the microcanonical state, i.e. $\left <   O  \right >_{\text{MC}(E)} := \tr( \tau_{E}  O)$. Define  $\overline{O'} := \sum_{i} T_i(O')$ with $T_i$ denoting a translation by $i$ sites. Following \cite{mori2016weak}, for any $\lambda > 0$ we have:
\begin{eqnarray}
\Pr_{\ket{E_k} \in S_{E}}  \left( \bra{E_k} O \ket{E_k} \geq \left <   O  \right >_{ \tau_{E}    } +  \delta   \right) &=& \Pr_{\ket{E_k} \in S_{E}}  \left( e^{\lambda \bra{E_k} \overline{O'} \ket{E_k}} \geq e^{\lambda \delta N}  \right) \nonumber \\
&\leq& e^{- \lambda \delta N} \mathbb{E}_{\ket{E_k} \in S_{E}}   \left(  e^{\lambda \bra{E_k} \overline{O'} \ket{E_k}}   \right) \nonumber \\
&\leq& e^{- \lambda \delta N} \mathbb{E}_{\ket{E_k} \in S_{E}}   \left(  \bra{E_k} e^{\lambda  \overline{O'}}\ket{E_k}   \right) \nonumber \\ 
&=& e^{- \lambda \delta N}  \left <   e^{\lambda  \overline{O'}}  \right >_{\text{MC}(E)}, \nonumber
\end{eqnarray}
where the first inequality follows from Markov's inequality and the second from the inequality: $e^{\bra{\psi} X \ket{\psi}} \leq \bra{\psi} e^X \ket{\psi}$, valid for any Hermitian matrix $X$ and state $\ket{\psi}$.  

The key step of the proof is the first line of the equation above where we used that $\bra{E_k}O \ket{E_k} = \bra{E_k} \overline{O'} \ket{E_k}/N$. This relation holds because the eigenstates are translation invariant. Therefore, we can replace the expectation of a local observable by the expectation value of an extensive observable, which allows us to bring the well-developed machinery of large deviation bounds for spins systems, which we now employ.

Let $\overline{O'} = \sum_{i} o_i P_i$ be the spectral decomposition of $\overline{O'}$. We have
\begin{eqnarray}
&&  e^{- \lambda \delta N}  \left <   e^{\lambda  \overline{O'}}  \right >_{\text{MC}(E)}  =  e^{- \lambda \delta N}  \sum_{i : o_i \leq \delta N/2} e^{\lambda o_i} \tr( \tau_{E}  P_i) +  e^{- \lambda \delta N}  \sum_{i : o_i > \delta N/2} e^{\lambda o_i} \tr( \tau_{E}  P_i).  \nonumber
\end{eqnarray}

We can upper bound the first term as follows:
\begin{equation} \label{firstbound}
e^{- \lambda \delta N} \sum_{i : o_i  \leq \delta N/2} e^{\lambda o_i} \tr( \tau_{E}  P_i) \leq e^{- \lambda \delta N/2}.
\end{equation}

For the second term, we have

\begin{eqnarray}
\sum_{i : o_i > \delta n/2} e^{\lambda o_i} \tr( \tau_{E}   P_i)  &=&  \sum_{j=0}^{4N} \left( \sum_{i :  \delta N/2 + (j + 1)  \geq o_i > \delta N/2 + j} e^{\lambda o_i} \tr( \tau_{E}   P_{i })  \right)  \nonumber \\
&\leq&  \sum_{j=0}^{4N}  e^{\lambda(\delta N / 2 + j + 1)} \tr( \tau_{E}   P_{\geq \delta N/2 + j}) := X.
\end{eqnarray}
with $P_{\geq \delta N/2 + j} = \sum_{j : o_j \geq  \delta N/2 + j} P_j$. 

Using Lemma \ref{largedeviation},
\begin{equation}
X \leq  \sum_{j=0}^{4N}  e^{\lambda(\delta N / 2 + j + 1)}  e^{\frac{2 N^{\alpha}}{\xi}} e^{ - c  \left( N \left( \frac{\delta}{2} + \frac{k}{N}   \right)^2 \xi  \right)^{1/2} } .
\end{equation}
Choosing
\begin{equation}
\lambda :=  O(\xi^{3/2} N^{-1/2})
\end{equation}
we find 
\begin{equation} \label{boundX}
X \leq e^{\frac{2 N^{\alpha}}{\xi}}  \exp \left( - c' \delta N^{1/2} / \xi^{3/2}  \right).
\end{equation}
Eqs. (\ref{boundX}) and (\ref{firstbound}) gives the statement. 
\end{proof}

A direct consequence of the proposition above is the following:

\begin{cor}  \label{diagonalequal}
Let $H$ be a 1D local Hamiltonian on $N$ qubits and $Z$ be a connected region with less than $O(\log(N))$ sites. Then 
\begin{equation}
\Pr_{\ket{E_k}, \ket{E_l} \in S_{E}}  \left(   \Vert  \tr_{\backslash Z}(\ket{E_k}\bra{E_k}) -  \tr_{\backslash Z}(\ket{E_l}\bra{E_l}) \Vert_1 \geq \delta \right)  \leq  \exp \left( - c \delta N^{1/2} / \xi^{3/2}  \right)
\end{equation}
for a constant $c$, with $\xi$ the correlation length of the microcanonical state of energy $E$. We denote by $\tr_{\backslash Z}$ the partial trace over the complement of $Z$. 
\end{cor}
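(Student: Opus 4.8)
The plan is to deduce the statement directly from Proposition~\ref{Moriquantitative}, which already controls $\bra{E_k}O\ket{E_k}$ for a single local observable $O$; the only extra work is to make that control uniform over \emph{all} observables supported on $Z$, which I would do with an $\eta$-net. Write $\rho_k^Z := \tr_{\backslash Z}(\ket{E_k}\bra{E_k})$ and recall the variational formula $\Vert\rho_k^Z-\rho_l^Z\Vert_1=\max_O\big(\bra{E_k}O\ket{E_k}-\bra{E_l}O\ket{E_l}\big)$, the maximum over Hermitian $O$ supported on $Z$ with $-I\le O\le I$ (here $O$ is identified with $O\otimes I$ on the full chain, the same convention used for $d$-local operators elsewhere).

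First I would discretize. Since $\vert Z\vert=O(\log N)$, the Hilbert space on $Z$ has dimension $D_Z=2^{\vert Z\vert}$, polynomial in $N$ with a small exponent, so the operator-norm unit ball of Hermitian operators on $Z$ — a convex body of real dimension $D_Z^2$ — has an $\eta$-net $\mathcal{M}_\eta$ with $\vert\mathcal{M}_\eta\vert\le(3/\eta)^{D_Z^2}$. Because $\Vert\rho_k^Z-\rho_l^Z\Vert_1\le 2$, replacing the maximizing $O$ by its nearest net point $O_0$ changes $\bra{E_k}O\ket{E_k}-\bra{E_l}O\ket{E_l}$ by at most $2\eta$, so it suffices to bound $\bra{E_k}O_0\ket{E_k}-\bra{E_l}O_0\ket{E_l}$ simultaneously over all $O_0\in\mathcal{M}_\eta$, with $\eta$ a small constant multiple of $\delta$.

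Next, for each fixed $O_0$ — supported on a connected region of $O(\log N)\ll N^{\alpha}$ sites and of norm $\le 1$ — I would apply Proposition~\ref{Moriquantitative} to $O_0$ and to $-O_0$: with probability at least $1-4\exp(-c\,\delta N^{1/2}/(8\xi^{3/2}))$ the quantity $\bra{E_k}O_0\ket{E_k}$ lies within $\delta/8$ of $\langle O_0\rangle_{\mathrm{MC}(E)}$, and likewise for $\ket{E_l}$, whence $\vert\bra{E_k}O_0\ket{E_k}-\bra{E_l}O_0\ket{E_l}\vert\le\delta/4$ off that event (the proposition is invoked at resolution $\delta/8$, which is still $\ge 1/N$ provided $\delta$ is at least inverse-polynomial in $N$). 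A union bound over $\ket{E_k},\ket{E_l}$ and over $\mathcal{M}_\eta$, with $\eta=\delta/8$, then yields $\Vert\rho_k^Z-\rho_l^Z\Vert_1\le\delta/4+2\eta\le\delta$ except with probability $\le C\,(24/\delta)^{D_Z^2}\exp(-c\,\delta N^{1/2}/(8\xi^{3/2}))$, which is the asserted estimate once the prefactor is absorbed.

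The one point needing care is exactly that absorption: the net inflates the exponent by $\log\vert\mathcal{M}_\eta\vert=D_Z^2\log(24/\delta)=4^{\vert Z\vert}\log(24/\delta)$, and although $4^{\vert Z\vert}$ is only polynomially small next to $N^{1/2}$, the claimed bound forces the implicit constant in $\vert Z\vert=O(\log N)$ to be small enough that $4^{\vert Z\vert}=o(N^{1/2})$; granting this, $c\,\delta N^{1/2}/(8\xi^{3/2})-4^{\vert Z\vert}\log(24/\delta)\ge\tfrac12 c\,\delta N^{1/2}/(8\xi^{3/2})$ for $N$ large, and renaming the constant gives the statement. (A cruder route via the Pauli expansion, using $\Vert\rho_k^Z-\rho_l^Z\Vert_1\le\big(\sum_P(\bra{E_k}P\ket{E_k}-\bra{E_l}P\ket{E_l})^2\big)^{1/2}$ together with a union bound over the $4^{\vert Z\vert}$ Paulis, is simpler but only gives $\exp(-\Omega(\delta N^{1/2}/D_Z))$, since there $2^{\vert Z\vert}$ enters multiplicatively in the per-Pauli threshold rather than additively against $N^{1/2}$; hence the net version is preferable if one wants the exponent as stated.)
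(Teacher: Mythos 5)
Your proposal is correct and follows essentially the same route as the paper: invoke Proposition~\ref{Moriquantitative} for each observable in an $\eta$-net over the unit ball of Hermitian operators on $Z$, take a union bound, and absorb the net cardinality $\exp(O(4^{|Z|}\log(1/\delta)))$ into the exponential by taking the implicit constant in $|Z|=O(\log N)$ small enough; the paper's proof is phrased one-sidedly and a bit more tersely, but your two-sided refinement via $\pm O_0$ is exactly the intended reading. Nothing is missing.
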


\begin{proof}
Proposition \ref{Moriquantitative} gives that for a fixed Hermitian matrix $O$ with $\Vert O \Vert = 1$ over $d = \eps \log(N)$ sites:
\begin{equation}
\Pr_{\ket{E_k}, \ket{E_l} \in S_{E}}  \left(   \tr(O(\tr_{\backslash Z}(\ket{E_k}\bra{E_k}) -  \tr_{\backslash Z}(\ket{E_l}\bra{E_l}))) \geq \delta/2 \right)  \leq  \exp \left( - c \delta N^{1/2} / \xi^{3/2}  \right).
\end{equation}
Consider a $\delta/2$-net ${\cal N}_{\delta}$  over the set of all Hermitian matrices of unit operator norm. We have $|{\cal N}_{\delta}| \leq (1/\delta)^{O(2^{2d})}$. Using the union bound
\begin{eqnarray}
&& \Pr_{\ket{E_k}, \ket{E_l} \in S_{E}}  \left(    \Vert  \tr_{\backslash Z}(\ket{E_k}\bra{E_k}) -  \tr_{\backslash Z}(\ket{E_l}\bra{E_l}) \Vert_1 \geq \delta/2 \right) \nonumber \\ &\leq& \Pr_{\ket{E_k}, \ket{E_l} \in S_{E}}  \left(       \max_{O \in {\cal N}_{\delta}} \tr(O(\tr_{\backslash Z}(\ket{E_k}\bra{E_k}) -  \tr_{\backslash Z}(\ket{E_l}\bra{E_l}))) \geq \delta \right) \nonumber \\
&\leq& |{\cal N}_{\delta} | Pr_{\ket{E_k}, \ket{E_l} \in S_{E}}  \left(       \max_{O \in {\cal N}_{\delta}} \tr(O(\tr_{\backslash Z}(\ket{E_k}\bra{E_k}) -  \tr_{\backslash Z}(\ket{E_l}\bra{E_l}))) \geq \delta \right) \nonumber \\ 
&\leq &|{\cal N}_{\delta}| \exp \left( - c \delta N^{1/2} / \xi^{3/2}  \right) \leq \exp \left( - c' \delta N^{1/2} / \xi^{3/2}  \right),
\end{eqnarray}
for $\eps > 0$ a constant sufficiently small.  

\end{proof}

\subsubsection{Off-Diagonal Elements are Small}

The next lemma, from Arad, Kuwahara and Landau \cite{arad2016connecting} (and attributed to Hastings), shows that for a local Hamiltonian, eigenstates well separated in energy are not connected by local operators. Its proof uses similar ideas to the proof of the Lieb-Robinson bound:

\begin{lem}  \label{offdiagonalsmall}  
[Theorem 2.1 of \cite{arad2016connecting}] Let $\Pi_{[\varepsilon', \infty]}$ and $\Pi_{[0, \varepsilon]}$ be projectors onto the subspaces of energies of $H$ that are $\geq \varepsilon'$ and $\leq \varepsilon$, respectively. For an operator $O$, let $E_O$ be a subset of interactions terms such that $[H, O] = \sum_{X \in E_O} [h_X, O]$, and let $R := \sum_{X \in E_O} \Vert h_X \Vert$. Then
\begin{equation}
\Vert  \Pi_{[\varepsilon', \infty]} O  \Pi_{[0, \varepsilon]}  \Vert \leq \Vert O \Vert . e^{- \lambda (\varepsilon' - \varepsilon - 2R)},
\end{equation}
with $\lambda := (2gk)^{-1}$ with $k$ the locality of $H$ and $g$ an upper bound on the number of local terms involving each particle.
\end{lem}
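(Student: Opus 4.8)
The plan is to trade the statement about two orthogonal energy windows for a bound on the operator norm of the imaginary-time-evolved observable $e^{\beta H}Oe^{-\beta H}$, and then to control that norm by a Lieb--Robinson--type estimate.

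First I would pass to imaginary time. Fix $\beta \geq 0$, take unit vectors $\ket{\psi} = \Pi_{[\varepsilon',\infty]}\ket{\psi}$ and $\ket{\phi} = \Pi_{[0,\varepsilon]}\ket{\phi}$, insert $I = e^{-\beta H}e^{\beta H}$ on both sides of $O$, and regroup:
\[
\bra{\psi}O\ket{\phi} = \left(\bra{\psi}e^{-\beta H}\right)\left(e^{\beta H}Oe^{-\beta H}\right)\left(e^{\beta H}\ket{\phi}\right).
\]
Because $\ket{\psi}$ is supported on energies $\geq \varepsilon'$ and $\ket{\phi}$ on energies $\leq \varepsilon$, one has $\|e^{-\beta H}\ket{\psi}\| \leq e^{-\beta\varepsilon'}$ and $\|e^{\beta H}\ket{\phi}\| \leq e^{\beta\varepsilon}$, and maximizing over $\ket{\psi},\ket{\phi}$ yields
\[
\|\Pi_{[\varepsilon',\infty]}O\Pi_{[0,\varepsilon]}\| \leq e^{-\beta(\varepsilon'-\varepsilon)}\,\|e^{\beta H}Oe^{-\beta H}\|.
\]
It therefore suffices to show $\|e^{\beta H}Oe^{-\beta H}\| \leq \|O\|\,e^{2\beta R}$ for $0 \leq \beta \leq \lambda := (2gk)^{-1}$ and then take $\beta = \lambda$, which reproduces the claimed bound $\|O\|e^{-\lambda(\varepsilon'-\varepsilon-2R)}$.

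The core of the proof is this last estimate, and I would attack it following the Lieb--Robinson template. Expand $e^{\beta H}Oe^{-\beta H} = \sum_{n\geq 0}\tfrac{\beta^n}{n!}\,\mathrm{ad}_H^n(O)$ with $\mathrm{ad}_H(O) := [H,O]$; the $n=1$ term is already controlled, since $[H,O] = \sum_{X\in E_O}[h_X,O]$ gives $\|\mathrm{ad}_H(O)\| \leq 2R\|O\|$. For $n\geq 2$ one expands $\mathrm{ad}_H^n(O)$ as a sum of nested commutators $[h_{X_n},[\dots,[h_{X_1},O]\dots]]$ over length-$n$ walks on the interaction (hyper)graph emanating from $\mathrm{supp}(O)$, bounds each nested commutator by $2^n\prod_i\|h_{X_i}\|\cdot\|O\|$, and counts the walks: since every term of $H$ is $k$-local and meets at most $g$ other terms, the combinatorial weight grows like $(gk)^n$, not like the (potentially geometric) growth of the raw support. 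Summing the resulting majorant series gives $\|e^{\beta H}Oe^{-\beta H}\| \leq \|O\|e^{2\beta R}$ with radius of convergence $\lambda = (2gk)^{-1}$, the $2\beta R$ being the $n=1$ contribution and the longer walks contributing only a bounded correction on $[0,\lambda]$.

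The main obstacle is exactly this combinatorial/analytic estimate: a naive argument that propagates the full support of $\mathrm{ad}_H^n(O)$ is too lossy, since that support can blow up geometrically, so one needs the LR-style cluster/walk counting to extract the sharp velocity $\lambda$, and some care is needed to get the correction in the exponent to be precisely $2R$ at the endpoint $\beta = \lambda$. As a reassurance that the mechanism is sound, note that if $H$ were replaced by just its ``attached'' part $H_O := \sum_{X\in E_O}h_X$, then $\|e^{\beta H_O}Oe^{-\beta H_O}\| \leq \|O\|e^{2\beta\|H_O\|} \leq \|O\|e^{2\beta R}$ holds for all $\beta$ by submultiplicativity of the norm together with $\|H_O\| \leq R$; the entire purpose of the step above is to absorb the additional spreading caused by the terms of $H$ disjoint from $\mathrm{supp}(O)$, which is also the reason $\beta$ cannot exceed $\lambda$.
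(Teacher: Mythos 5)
First, note that the paper does not prove this lemma at all: it is imported verbatim as Theorem 2.1 of Arad--Kuwahara--Landau, with only the remark that its proof is Lieb--Robinson-like. Your overall strategy --- conjugate by imaginary time to get $\Vert \Pi_{[\varepsilon',\infty]} O \Pi_{[0,\varepsilon]}\Vert \leq e^{-\beta(\varepsilon'-\varepsilon)}\Vert e^{\beta H} O e^{-\beta H}\Vert$, then control the conjugated observable via nested commutators --- is indeed the strategy of the cited proof, and your first reduction is correct as written (using $H\geq 0$ on the range of $\Pi_{[0,\varepsilon]}$).

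The gap is the central estimate $\Vert e^{\beta H} O e^{-\beta H}\Vert \leq \Vert O\Vert e^{2\beta R}$ at $\beta = \lambda = (2gk)^{-1}$, which you assert rather than prove, and which the route you describe cannot deliver. The walk-counting you sketch gives, at best, $\Vert \mathrm{ad}_H^{n}(O)\Vert \leq \Vert O\Vert \prod_{j=0}^{n-1}(2R+2jgk)$: at the $j$-th level the commutator can attach to any interaction term meeting the already-touched region, whose total norm grows like $R + jgk$, so the product acquires an extra factorial-type growth $(2gk)^n\,\Gamma(n+R/gk)/\Gamma(R/gk)$ rather than being $(2R)^n$ times a bounded correction. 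The resulting majorant series is $\sum_n \frac{\beta^n}{n!}\prod_{j=0}^{n-1}(2R+2jgk) = (1-2gk\beta)^{-R/(gk)}$, which exceeds $e^{2\beta R}$ for every $\beta>0$ and \emph{diverges} precisely as $\beta \to \lambda^{-}$; the endpoint you need is exactly where the series blows up, so "summing the majorant" cannot close the argument, and "longer walks contributing only a bounded correction" would in any case already violate the inequality $\leq e^{2\beta R}$ unless that correction is $\leq 1$. Optimizing instead over $\beta < \lambda$ yields $e^{-\lambda(\varepsilon'-\varepsilon-2R)}\bigl((\varepsilon'-\varepsilon)/2R\bigr)^{R/(gk)}$, i.e., the stated exponent but with an extra polynomial prefactor. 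Recovering the clean constant requires the finer iterative scheme of the original reference (or accepting the weaker prefactor, which would still suffice for how the lemma is used in Corollary 5 and Theorem 1). You flagged this endpoint issue yourself, but it is the heart of the proof, not a detail to be deferred.
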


The result has a straightforward corollary, which we state for future use:
\begin{cor} \label{smallof2}
Let $\ket{E_{k_1}}$ and $\ket{E_{k_2}}$ be two eigenstates of a local Hamiltonian and $Z$ a connected region. Then
\begin{equation}
\Vert \tr_{\backslash Z}\left( \ket{E_{k_2}}\bra{E_{k_1}} \right)  \Vert_1 \leq e^{- \lambda (|E_{k_1} - E_{k_2}| - 2|Z|)}
\end{equation}
with $Z$ the size of $Z$ and $\lambda > 0$ a universal constant. 
\end{cor}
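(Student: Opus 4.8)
The plan is to obtain this as a direct consequence of Lemma~\ref{offdiagonalsmall} after dualizing the trace norm. The first step is to observe that, since $\tr_{\backslash Z}(\ket{E_{k_2}}\bra{E_{k_1}})$ is an operator supported on $Z$, trace–norm duality gives
\[
\bigl\Vert \tr_{\backslash Z}(\ket{E_{k_2}}\bra{E_{k_1}}) \bigr\Vert_1 = \max_{\substack{\Vert O \Vert \le 1 \\ \mathrm{supp}(O)\subseteq Z}} \bigl| \tr\bigl( O\, \tr_{\backslash Z}(\ket{E_{k_2}}\bra{E_{k_1}}) \bigr) \bigr| = \max_{\substack{\Vert O \Vert \le 1 \\ \mathrm{supp}(O)\subseteq Z}} \bigl| \bra{E_{k_1}} (O \otimes I_{\backslash Z}) \ket{E_{k_2}} \bigr|,
\]
where the last equality uses that the partial trace is the adjoint of tensoring with the identity. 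This reduces the claim to bounding $|\bra{E_{k_1}} O \ket{E_{k_2}}|$ for an arbitrary operator $O$ of operator norm one supported on $Z$.

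For the second step, assume without loss of generality $E_{k_1} \le E_{k_2}$ (and that $H$ has been shifted to be positive semidefinite, which is harmless since only the energy difference $\Delta := |E_{k_1}-E_{k_2}|$ enters). Since $\Pi_{[0,E_{k_1}]}\ket{E_{k_1}} = \ket{E_{k_1}}$ and $\Pi_{[E_{k_2},\infty]}\ket{E_{k_2}} = \ket{E_{k_2}}$, we may insert these projectors and pass to the adjoint to obtain
\[
\bigl| \bra{E_{k_1}} O \ket{E_{k_2}} \bigr| \le \bigl\Vert \Pi_{[0,E_{k_1}]}\, O\, \Pi_{[E_{k_2},\infty]} \bigr\Vert = \bigl\Vert \Pi_{[E_{k_2},\infty]}\, O^{\dagger}\, \Pi_{[0,E_{k_1}]} \bigr\Vert \le \Vert O \Vert\, e^{-\lambda(\Delta - 2R)},
\]
where the final inequality is Lemma~\ref{offdiagonalsmall} with $\varepsilon = E_{k_1}$, $\varepsilon' = E_{k_2}$, and $R = \sum_{X \in E_{O}} \Vert h_X \Vert$ the total weight of interaction terms failing to commute with $O$ (equivalently with $O^\dagger$).

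The last step is to bound $R$ in terms of $|Z|$. Since $O$ acts only on the connected region $Z$, any term $h_X$ that does not commute with $O$ must satisfy $X \cap Z \neq \emptyset$, and for a local Hamiltonian of bounded interaction range and bounded overlap degree there are only $O(|Z|)$ such terms, each of bounded norm; hence $R \le c\,|Z|$ with $c$ fixed by the interaction structure. Substituting and using $\Vert O\Vert = 1$ gives the bound $e^{-\lambda(\Delta - 2c|Z|)}$, and after relabeling the coefficient of $|Z|$ one arrives at the stated form. (When $\Delta \le 2c|Z|$ the inequality is trivial, since the partial trace is trace-norm contractive and $\Vert \ket{E_{k_2}}\bra{E_{k_1}}\Vert_1 = 1$.) This is essentially routine; the only point requiring care is the bookkeeping of constants: the decay rate $\lambda$ and the coefficient multiplying $|Z|$ both inherit from Lemma~\ref{offdiagonalsmall} a dependence on the locality and degree of $H$, so ``universal constant'' here means independent of the system size $N$ and of the chosen eigenstates rather than of the model itself. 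There is no genuine obstacle beyond this.
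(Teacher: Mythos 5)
Your proof is correct, and it supplies exactly the argument that the paper leaves implicit: the paper states Corollary~\ref{smallof2} as a "straightforward corollary" of Lemma~\ref{offdiagonalsmall} without giving any derivation, so there is no "paper proof" to compare against. Your route—dualize the trace norm to reduce to bounding $|\bra{E_{k_1}}(O\otimes I)\ket{E_{k_2}}|$ for a norm-one operator $O$ supported on $Z$, insert the eigenprojectors $\Pi_{[0,E_{k_1}]}$ and $\Pi_{[E_{k_2},\infty]}$, pass to the adjoint to place the projectors in the order the lemma requires, and then bound $R$ by the number of interaction terms intersecting $Z$—is the natural and essentially the only reasonable way to obtain this statement from that lemma, and all three steps are sound.

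The one place where your write-up is slightly loose, and which you already flag, is the final bookkeeping: you obtain $e^{-\lambda(\Delta - 2R)}$ with $R\le c|Z|$, i.e.\ $e^{-\lambda\Delta + 2\lambda c|Z|}$, and this cannot be rewritten as $e^{-\lambda'(\Delta - 2|Z|)}$ for a single constant $\lambda'$ by a mere "relabeling" unless $c=1$, since $\lambda'$ would have to multiply both terms with the same weight. In fact, for a 2-local 1D Hamiltonian normalized so that $\Vert h_X\Vert\le 1$, a connected region of size $|Z|$ intersects at most $|Z|+1$ interaction terms, so $R\le |Z|+1$; absorbing the $+1$ into the exponential prefactor (or assuming $|Z|\ge 1$ and normalizing $\Vert h_X\Vert \le |Z|/(|Z|+1)$, or simply stating the bound with a model-dependent constant in front of $|Z|$) gives the stated form. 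This is a cosmetic issue rather than a gap: the corollary's coefficient $2$ in $2|Z|$ is already informal, and the only thing used downstream in Theorem~\ref{thm:main} is that the bound is $e^{-\lambda(|E_{i_p}-E_{i_q}| - O(\log N))}$ when $|Z| = O(\log N)$, which your argument delivers. You are also right that "universal constant" here really means "independent of $N$ and of the chosen eigenstates," not of the interaction graph and norms of $H$.
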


\subsubsection{Proof of Theorem \ref{thm:main}}

\textbf{Theorem~\ref{thm:main}.}\textit{
Let $H$ be a 1D translation invariant local Hamiltonian and $E$ be such that the microcanonical state at energy $E$ has finite correlation length (independent of system size). Pick $\ket{E_{i_1}}, \ldots, \ket{E_{i_L}}$ uniformly independently at random from $S_{E} := \{ \ket{E_{i}} : E_i \in [E - \sqrt{N}, E + \sqrt{N}] \}$, where $\{ \ket{E_i} \}_{i}$ is a  basis of translation-invariant eigenstates of $H$, and $k := \log(L) = \Omega (\log(N))$. Then with high probability they form an $[[N, k, d, \varepsilon]]$ AQECC with $\varepsilon = O(1/N)^{1/8}$ and 
\begin{equation}
d = \min \left( \Omega(\log(N)), \min_{p \neq q \in [L]} |E_{i_{p}} - E_{i_{q}}| - O(\log(N))  \right).
\end{equation} 
}

\begin{proof}
The theorem is a consequence of Corollary \ref{diagonalequal}, Corollary \ref{smallof2} and Corollary \ref{easytostrongmapping}. Indeed, the union bound and Corollary \ref{diagonalequal} show that with high probability, for every $p \neq q \in [L]$ and $Z$ with $|Z| = O(\log(N))$
\begin{equation}
 \Vert  \tr_{\backslash Z}(\ket{E_{i_p}}\bra{E_{i_p}}) -  \tr_{\backslash Z}(\ket{E_{i_q}}\bra{E_{i_q}}) \Vert_1 \leq N^{-\frac{2}{5}}.  
\end{equation}

Corollary \ref{smallof2}, in turn, gives that for every $p \neq q \in [L]$ and $Z$ with $|Z| = O(\log(N))$
\begin{equation}
\Vert \tr_{\backslash Z}\left( \ket{E_{i_p}}\bra{E_{i_q}} \right)  \Vert_1 \leq e^{- \lambda (|E_{i_p} - E_{i_q}| - \Omega(\log(N)))}.
\end{equation}
These two conditions and Corollary \ref{easytostrongmapping} allows us to bound the error of the code as
\begin{equation}
\varepsilon \leq 2^{2k + d} (e^{- \lambda (|E_{i_p} - E_{i_q}| - \Omega(\log(N)))} + N^{-\frac{2}{5}})^{1/2} \leq O(1/N)^{1/4},
\end{equation}
choosing the several constants appropriately.
\end{proof}

\vspace{0.2 cm}

\noindent \textbf{Observation 1:}  One drawback of the theorem is that if the minimum energy gap is less than $\Omega(\log(N))$, then the distance is zero. With high probability this will not be the case (since the energy window is $2 \sqrt{N}$, we pick $O(\log(N))$ elements uniformly at random, and the energy distribution of eigenvalues of a random model is normal \cite{brandao2015equivalence}). However, if we want to make sure that this bad case will not happen, we can consider a variant of the theorem in which we consider energies $[E, E+2\sqrt{N}, \ldots, E+L\sqrt{N}]$ and pick each state uniformly from $S_{E + 2j \sqrt{N}}$ for $j \in [L]$. 

\vspace{0.2 cm}

\noindent \textbf{Observation 2:} The only feature we used of being in one dimension is that the microcanonical states at finite energy density always have a finite correlation length. Therefore the theorem generalizes to higher dimensions for eigenstates with finite energy densities (albeit with a worse scale of the error of the code). 

\section{Parent Hamiltonians from local symmetries}
Let $H_C$ be a 1D translation invariant classical Hamiltonian with periodic boundary conditions acting on $N$ qudits with local dimension $D$.  The statement that $H_C$ is classical means that there is some tensor product basis $B = \{|s_1,...,s_N\rangle\}$, with $s_i \in \{1,...,D\}$ for each $i$, such that we can express  $H_C$ as
$$
H_C = \sum_{\mathbf{s} \in B}H_C(\mathbf{s}) |\mathbf{s}\rangle \langle \mathbf{s}|
$$
Let $G$ be a locally generated group of symmetries of $H_C$.  Since $H_C$ is translation invariant these generators are described by a set of $k$-local invertible linear maps $r_1,...,r_p$, with $r_i : \mathbb{C}^{D^k} \rightarrow \mathbb{C}^{D^k}$ for each $i = 1,...,p$, and their translations.  The action of $r_i$ on sites $j,...,j+k$ is expressed by $|s_1...s_j ...s_{j+k}...s_N\rangle \mapsto |s_1...r_i(s_j...s_{j+k})...s_N\rangle$.  Since $r_i$ describes a symmetry of $H_C$ it follows that
$$
H_C|s_1...s_j ...s_{j+k}...s_N\rangle = H_C|s_1...r_i(s_j...s_{j+k})...s_N\rangle
$$
for all $i,j,k$ and for all $\mathbf{s} \in B$.  Furthermore, for each $|\mathbf{s}\rangle$ it follows that all of the states in the orbit $G(\mathbf{s}) = \{|g(\mathbf{s})\rangle : g \in G\}$ also have the same energy with respect to $H_C$.     These orbits partition $B$ into subsets $B_1,...,B_m$.

For each local symmetry generator $r_i$ and each site $j \in \{1,...,N\}$ we can define a local projector
\begin{align}
\Pi^r_{j} = \frac{1}{2}\sum_{s_j,...,s_{j+k}}\bigl(|s_j...s_{j+k}\rangle \langle s_j... s_{j+k}|+|r(s_j...s_{j+k})\rangle \langle r(s_j... s_{j+k})|\nonumber \\
-|r(s_j...s_{j+k})\rangle \langle s_j... s_{j+k}|-|s_j...s_{j+k}\rangle \langle r(s_j... s_{j+k})|\bigr),
\end{align}
and if $H$ is the Hamiltonian defined by the sum of all these projectors 
$$
H = \sum_{j = 1}^N \sum_{r = 1}^p \Pi^{r_i}_j.
$$
then the ground space of $H$ is $m$-fold degenerate, and it is spanned by states which are uniform superpositions of the states in each orbit,
\begin{equation}
|\psi_m \rangle = \frac{1}{|B_m|} \sum_{|\mathbf{s}\rangle \in B_m} |\mathbf{s}\rangle.
\end{equation}

\section{Heisenberg spin chain AQECC}

The Hamiltonian of the spin 1/2 quantum Heisenberg chain on $N$ sites with periodic boundary conditions is
\begin{equation}
H = -\frac{1}{2}\sum_{j = 1}^N \left(\sigma^x_j \sigma^x_{j+1} + \sigma^y_j \sigma^y_{j+1}+ \sigma^z_j \sigma^z_{j+1}\right).\label{eq:Heisenbergham-appendix}
\end{equation}
To study this system we use the $\sigma^z$-basis consisting of states $|\mathbf{s}\rangle$, where $\mathbf{s}= (s_1, s_2, \ldots, s_N) \in \{-1,1\}^N$ and each $s_j \in \{-1,1\}$ labels the eigenvalue of $\sigma^z_j$ .  Let $M = \sum_{j = 1}^N \sigma^z_j$ be the magnetization operator and define $M(\mathbf{s}) = \langle \mathbf{s} | M | \mathbf{s}\rangle = \sum_{i = 1}^N s_i$. Since $[M,H] = 0$ the model \eq{Heisenbergham-appendix} has an $N$-fold degenerate ground space, with ground states that are labeld by magnetization values $m \in \{-N, -N+2, -N+4, \ldots, N-2, N\}$.  We denote these ground states by $|h_m^N\rangle$, where $m$ is the magnetization and the system size $N$ is made explicit because we will also consider ground states of the Heisenberg chain on connected subsets of the $N$ sites.  As is well known from the exact solution of \eq{Heisenbergham-appendix} the state $|h_m^N\rangle$ can be expressed in terms of the $\sigma^z$-basis states as
\begin{equation}
|h_m^N\rangle=\dfrac{1}{\sqrt{{N \choose {N/2 + m/2}}}} \displaystyle\sum_{\mathbf{s}: M(\mathbf{s})=m} |\mathbf{s} \rangle .\label{eq:(m,N)}
\end{equation}
As part of the verification of the error correcting conditions we will use expressions for the $d$-body connected reduced density matrices of these ground states. By taking $d$ to always be asymptotically smaller than $N$ we can express the Schmidt decomposition of \eq{(m,N)} along the cut between the $d$ spins where the error acts nontrivially and its compliment,
\begin{equation}\label{Heisenberg-groundstates}
|h_m^N\rangle= \displaystyle\sum^{d}_{r=-d} |h_r^d\rangle |h_{m-r}^{N-d}) \rangle \left[\dfrac{{d \choose d/2 + r/2}{N-d \choose N/2 - d/2 + m/2 - r/2}}{{N \choose N/2 + m/2}}\right]^{1/2}.
\end{equation}
It follows from (\ref{Heisenberg-groundstates}) that the reduced density matrix on $d$ sites that $\rho_{d}(m,N) = \tr_{N-d}\left(|h_m^N\rangle\langle h_m^N|\right)$ (note that $\tr_{N -d}$ refers to the trace over the $(N-d)$-neighboring sites) is given by
%\begin{equation}
%\rho_d(m,N)= \displaystyle\sum^{d}_{r=-d} |(r,d)\rangle \langle (r,d)| \dfrac{{d \choose d/2 + r/2}{N-d \choose N/2 - d/2 + m/2 - r/2}}{{N \choose N/2 + m/2}}. \label{eq:dbodyrdm}
%\end{equation}
\begin{equation}
\rho_d(m,N)= \displaystyle\sum^{d}_{r-d} |h_r^d\rangle \langle h_{r}^d| \left [ \dfrac{{d \choose d/2 + r/2}{N-d \choose N/2 - d/2 + m/2 - r/2}}{{N \choose N/2 + m/2}} \right]. \label{eq:dbodyrdm}
\end{equation}

%We will now prove the conditions on $d,m_{\max},\varepsilon$ to achieve an $[[N, k, d, \varepsilon]]$-AQECC by finding  the compatible $k=\log(2m_{\max}/I),d$ such that the total error $\varepsilon$ of the code is small to a satisfying degree (i.e., scales $\approx N^{-c}$ for some constant $c$, $0 < c <1$).
%i.e. for all $\varepsilon >0$ there exists an $N_{\epsilon} \in \mathbb N$ such that for all $N > N_{\epsilon}$ the above code with code subspace of dimension $k=\log 2m_{\max}/d$ has  $d$-distance with error $\varepsilon$.

\subsubsection{Proof of Proposition~\ref{prop:Heisenberg}}
\begin{proof}

We separate the approximate error correction condition into two parts. First is the nonexistence of a $d$-body operator that maps different codewords to each other. More precisely, this corresponds to the case $i \neq j$ in Eq.~\eqref{eq:aqec}. Note that $\langle h_m^N|E |h_{m'}^N\rangle = 0$ by construction whenever $m$ and $m'$ are distinct codewords, since $|m - m'| > 2d$ and $E$ is the error operator supported on a connected region of $d$-sites, so it can change the magnetization by at most $2d$.  Noting that a $d$-local operator can change the magnetization of $\mathbf{s}$ by at most $2d$, define  $I = 2 d + 1$ and define the code space to be
\begin{equation}
\mathcal{C} = \textrm{Span}\left(\left\{|h_m^N\rangle:m=-m_{\max}, -m_{\max} + I, \ldots, m_{\max} - I, m_{\max}  \right\}\right).\label{eq:Heisenbergcodespace}
\end{equation} 
where $m_{\max}$ is a multiple of $I$ to be chosen later.  The second approximate error correction condition is the local indistinguishability of the $d$-body reduced density matrices of the codewords. More precisely, this corresponds to the case $i =j$ in Eq.~\eqref{eq:aqec}. Below we show that an arbitrary observable has approximately the same expectation value for all of the codewords by computing the trace distance between the local reduced density matrices $\rho_{d}(m,N) = \tr_{N-d}\left(|h_m^N\rangle\langle h_m^N|\right), \rho_{d}(m',N) = \tr_{N-d}\left(|h_{m'}^N\rangle\langle h_{m'}^N| \right)$ on the support of the error (note that $\tr_{N -d}$ refers to the trace over the $N-d$ sites where the error acts trivially. The error is assumed to act on $d$-neighboring sites).  Using the $d$-body reduced density matrices \eq{dbodyrdm} this distance is
\iffalse
\begin{equation}
\rho_d(m,N)= \displaystyle\sum^{d}_{r=-d} |(r,d)\rangle \langle (r,d)| \dfrac{{d \choose d/2 + r/2}{N-d \choose N/2 - d/2 + m/2 - r/2}}{{N \choose N/2 + m/2}}
\end{equation}

and

\begin{equation}
\rho_d(m',N)= \displaystyle\sum^{d}_{r=-d} |(r,d)\rangle \langle (r,d)| \dfrac{{d \choose d/2 + r/2}{N-d \choose N/2 - d/2 + m'/2 - r/2}}{{N \choose N/2 + m'/2}}.
\end{equation}

The trace distance between these two reduced density matrices is given by
\fi
\begin{equation}
 \|\rho_d(m,N) - \rho_d(m',N)\|_1 = \displaystyle\sum^{d}_{r=-d} {d \choose d/2+r/2}\left|\dfrac{{N-d \choose N/2-d/2+m/2-r/2}}{{N \choose N/2+m/2}}-\dfrac{{N-d \choose N/2-d/2+m'/2-r/2}}{{N \choose N/2+m'/2}}\right|.\label{eq:Heisenbergdiagonalerror}
\end{equation}
Since $\max_{m,m'} |m - m'| \leq 2m_{\max}$ and $0\ll d \ll 2m_{\max} \ll N^{1/2}$ we can apply the asymptotic approximation for the binomial coefficients inside the absolute value, ${a \choose a/2 + b/2} \approx \dfrac{2^{a+1}}{\sqrt{2 \pi a}}e^{-b^2/2a}$ to approximate \eq{Heisenbergdiagonalerror} by
\begin{equation}
\displaystyle\sum^{d}_{r=-d} {d \choose d/2+r/2} \dfrac{2^{-d}\sqrt{N}}{\sqrt{N-d}} \left| \exp \left[\frac{1}{2}\left(\frac{m^2}{N} - \frac{(m-(d+r))^2}{N-d}\right)\right] - \exp \left[\frac{1}{2}\left(\frac{m'^2}{N} - \frac{(m'-(d+r))^2}{N-d}\right)\right] \right|. \label{eq:Heisenbergdiagonalerrorintermediate}
\end{equation}
Note that the asymptotic expression does not apply to ${d \choose d/2+r/2}$ for all values of $r$, so instead we can use ${d \choose d/2+r/2} \leq 2^d  / \sqrt{d}$ to obtain an upper bound on \eq{Heisenbergdiagonalerrorintermediate},
\begin{equation}
\displaystyle\sum^{d}_{r=-d} \dfrac{\sqrt{N}}{\sqrt{d} \sqrt{N-d}} \left| \exp \left[\frac{1}{2}\left(\frac{m^2}{N} - \frac{(m-(d+r))^2}{N-d}\right)\right] - \exp \left[\frac{1}{2}\left(\frac{m'^2}{N} - \frac{(m'-(d+r))^2}{N-d}\right)\right] \right|.
\end{equation}
Applying a Taylor series expansion (again using $0\ll d \ll 2m_{\max} \ll N^{1/2}$) to the term inside the absolute value,
\begin{align}
 \left|\frac{m^2 - m'^2}{N} - \frac{(m-(d+r))^2 - (m'-(d+r))^2}{N-d}\right| & = \left| \frac{2(m-m')(d+r)}{N-d} - \frac{d(m^2-m'^2)}{N(N-d)}\right|\\
 & \leq \left| \frac{2(m-m')(d+r)}{N-d}\right| +\left| \frac{d(m^2-m'^2)}{N(N-d)}\right| ,\label{eq:Heisenbergintermediateexpression}
\end{align}

and we see that the second term in Eq.~\eqref{eq:Heisenbergintermediateexpression} has a subleading contribution to the scaling of the error, so
\begin{align}
 \|\rho_d(m,N) - \rho_d(m',N)\|_1 = \mathcal{O} \left( \frac{\sqrt{N}d^{3/2}|m-m'|}{(N-d)^{3/2}} \right) =\mathcal{O}\left(\frac{d^{3/2}|m - m'|}{N}\right) = \mathcal{O}\left(\frac{d^{3/2}m_{\max}}{N}\right) \label{eq:scalingOfHeisenbergError}.
\end{align}

To encode $k = \lfloor \log_2 \textrm{dim}(\mathcal{C}) \rfloor$ logical qubits it suffices to take $m_{\max} = d \cdot 2^k$.  Therefore for any $a, b > 0$ with $5 a/2 +  b < 1/2$ the ground space of the spin 1/2 Heisenberg model contains an $[[N, k, d, \varepsilon]]$ AQECC with $k = a \log N$ , $d = b \log N$, and $\varepsilon = \mathcal{O}\left(\frac{\log^{2} N}{N^{1/2 - 5 a/2 - b}} \right)$.
\end{proof}

\section{Motzkin spin chain AQECC}\label{sec:Motzkin}

\subsubsection*{Proof of Proposition~\ref{prop:motzkin}}
\begin{proof}
Let $H$ be the Hamiltonian of the spin 1 Motzkin model on $N$ sites with periodic boundary conditions.  For a $S^z$ basis state $|\mathbf{w}\rangle$ with $\mathbf{w} = (w_1,...,w_N) \in \{-1,0,1\}^N$, define the magnetization $M(\mathbf{w})$ of $|\mathbf{w}\rangle$ by $M(\mathbf{w}) = \sum_{i=1}^N w_i$.   For each $m$ define $|g_m^N\rangle$ to be the normalized superposition of all basis states with magnetization $m$,
$$
|g_m^N\rangle = \frac{1}{|g^N_m|}\sum_{\mathbf{w} : M(\mathbf{w}) = m} |\mathbf{w}\rangle,
$$
where $g^N_m$ is the set of $S^z$-basis states on $N$ sites with magnetization $m$, and $|g_m^N|$ denotes the cardinality of this set.  The ground space of $H$ is spanned by the states $\{|g_m^N\rangle : m = -N, ... , N\}$.  Noting that a $d$-local operator can change the magnetization of $|\mathbf{w}\rangle$ by at most 
$2 d$, define $I = 2 d + 1$ and let $m_{\max}$ be a multiple of $I$ to be chosen later.  We will next show that the subspace
\begin{equation}
\mathcal{C} = \textrm{Span}\left(\left\{|g_m^N\rangle:m=-m_{\max}, -m_{\max} + I, \ldots, m_{\max} - I, m_{\max}  \right\}\right).\label{eq:motzkincodespace}
\end{equation} 
is an AQECC.  For the case $i \neq j$ in Eq.~\eqref{eq:aqec} we note that $\langle g_m^N|E |g_{m'}^N\rangle = 0$ by construction whenever $m$ and $m'$ are distinct codewords, since $|m - m'| > 2d$ and $E$ is a $d$-local error so it can change the magnetization by at most $2d$.  Next for the case $i =j$ in Eq.~\eqref{eq:aqec} we show that an arbitrary observable has approximately the same expectation value for all of the codewords by computing the trace distance between the local reduced density matrices $\rho^{m}_{d} = \tr_{N-d}\left(|g_m^N\rangle\langle g_m^N|\right), \rho^{m'}_{d} = \tr_{N-d}\left(|g_{m'}^N\rangle\langle g_{m'}^N| \right)$ on the support of the error (note that $\tr_{N -d}$ refers to the trace over the $N-d$ sites where the error acts trivially).   As discussed in the introduction we assume that $E$ is an arbitrary error which acts on $d$ consecutive sites.  To compute the reduced density matrices we use the Schmidt decomposition,
\begin{equation}
|g_m^N\rangle = \sum_{r = -d}^{d} \left[\frac{|g^{N-d}_{m-r}||g^{d}_{r}|}{|g^N_m|}\right]^{1/2} |g^{N-d}_{m-r}\rangle |g^d_{r}\rangle, \label{eq:motzkinSchmidt}
\end{equation}
and using Eq.~\eqref{eq:motzkinSchmidt} to compute the reduced density matrices $\rho_d(m,N) , \rho_d(m',N)$, the trace distance between them is
\begin{equation}
\|\rho_{d}(m,N)-\rho_{d}(m',N)\|_1 = \sum_{r = -d}^d|g^{d}_{r}|\left|\frac{|g^{N-d}_{m-r}|}{|g^N_m|}-\frac{|g^{N-d}_{m'-r}|}{|g^N_{m'}|}\right|.\label{eq:motzkinerror}
\end{equation}
The quantity $|g^L_i|$ can be computed as follows:  for $i \geq 0$ we can have a number of flat steps equal to $f = 0,...,L-i$, and for each $f$ these flat steps can be arranged in $\binom{L}{f}$ ways, and to achieve magnetization $i$ we must insert $\frac{L -f +i}{2}$ up steps in $L-f$ positions.  The case $i < 0$ is similar, therefore
\begin{equation}
|{g^L_i}|=\sum_{f=0}^{L-i}\binom{L-f}{(L-f+i)/2}\binom{L}{f}.\label{eq:exactGLI}
\end{equation}
When $i = \mathcal{O}(\sqrt{L})$ an asymptotic expansion for \eq{exactGLI} is given in equation (30) of \cite{movassagh2017entanglement},
\begin{equation}
|g^{L}_{i}|\approx\frac{3^{L+1/2}}{2\sqrt{\pi L}}e^{-3 i^2/4L}.
\end{equation}
Restricting $m_{\max}$ to be $\mathcal{O}(\sqrt{N})$ in \eq{motzkincodespace}, upper bounding the first binomial coefficient with $|g^d_0|$ in Eq.~\eqref{eq:motzkinerror} and applying this asymptotic result to Eq.~\eqref{eq:motzkinerror} yields

\begin{equation}
\|\rho_d(m,N)- \rho_d(m',N)\|_1  \leq  \frac{3^{1/2} \sqrt{N}d}{\sqrt{\pi d (N-d)}} \max^{}_{r}\left|\exp\left[{\frac{3}{4}\left( \frac{ m^2}{N} - \frac{ (m-r)^2}{(N-d)} \right)}\right] - \exp\left[{\frac{3}{4}\left( \frac{ m'^2}{N} - \frac{ (m'-r)^2}{(N-d)} \right)}\right] \right |.
\end{equation}

Taylor expanding the exponentials inside the absolute value,
$$
\left|\exp\left[{\frac{3}{4}\left( \frac{ m^2}{N} - \frac{ (m-r)^2}{(N-d)} \right)}\right] - \exp\left[{\frac{3}{4}\left( \frac{ m'^2}{N} - \frac{ (m'-r)^2}{(N-d)} \right)}\right] \right| \approx  \frac{3}{4} \left |\frac{(m^2 - m'^2)}{N} - \frac{(m-r)^2 -(m' - r)^2}{N-d}\right | .
$$
The term in the absolute value satisfies
\begin{align}
\left |\frac{(m^2 - m'^2)}{N} - \frac{(m-r)^2 -(m' - r)^2}{N-d}\right | &=\left| \frac{2(m-m')r}{N-d} - \frac{r(m^2-m'^2)}{N(N-d)}\right| \\
&\leq \left| \frac{2(m-m')r}{N-d}\right| + \left|\frac{r(m^2-m'^2)}{N(N-d)}\right|,\label{eq:motzkinintermediateexpression}
\end{align}
and since the second term in \eq{motzkinintermediateexpression} is subleading we have
\begin{align}
\|\rho_{d}(m,N)-\rho_{d}(m',N)\|_1 = \mathcal{O}\left(\frac{ \sqrt{N}|m - m'|d^2}{\sqrt{d}(N-d)^{3/2}}\right) = \mathcal{O}\left(\frac{d^{3/2}|m-m'|}{N}\right) = \mathcal{O}\left(\frac{d^{3/2}m_{\max}}{N}\right). \label{eq:scalingOfMotzkinError}
\end{align}

To encode $k = \lfloor \log_2 \textrm{dim}(\mathcal{C}) \rfloor$ logical qubits it suffices to take $m_{\max} = d \cdot 2^k$.  Therefore for any $a, b > 0$ with $5 a/2 + b < 1/2$ the ground space of the spin 1 Motzkin model contains an $[[N, k, d, \varepsilon]]$ AQECC with $k = a \log N$ , $d = b \log N$, and $\varepsilon = \mathcal{O}\left(\frac{\log^{2} N}{N^{1/2 - 5 a/2 - b}} \right)$.
\end{proof}
\end{document}